\def\etal{\textit{et~al.}}
\def\polylog{\mathop{\mathrm{polylog}}}
\def\eps{\varepsilon}
\def\seq#1{\langle #1 \rangle}
\def\Ceil#1{\left\lceil #1 \right\rceil}
\def\Seq#1{\left\langle #1 \right\rangle}
\def\Set#1{\left\{ #1 \right\}}
\newtheorem{lemma}{Lemma}[section]
\newtheorem{theorem}[lemma]{Theorem}
\newtheorem{corollary}[lemma]{Corollary}
\numberwithin{figure}{section}
\def\corrs{\Pi}
\def\reparamP{\sigma}
\def\reparamQ{\theta}
\def\dist{\mathtt{d}}
\def\cost{\mu}
\def\frechet{\mathsf{FD}}
\begin{document}
\begin{titlepage}

\title{Approximating the (Continuous) Fr\'{e}chet Distance\thanks{%
Most of this work was done while the first author was a student at the University of Texas at
Dallas.}}

\author{Connor Colombe\thanks{%
    The University of Texas at Austin;
    \url{ccolombe@utexas.edu}}
		\and
		Kyle Fox\thanks{%
    The University of Texas at Dallas;
    \url{kyle.fox@utdallas.edu}}
}

\maketitle

\begin{abstract}
  We describe the first strongly subquadratic time algorithm with subexponential approximation ratio
  for approximately computing the Fr\'echet distance between two polygonal chains.
  Specifically, let~\(P\) and~\(Q\) be two polygonal chains with~\(n\) vertices in \(d\)-dimensional
  Euclidean space, and let \(\alpha \in [\sqrt{n}, n]\).
  Our algorithm deterministically finds an \(O(\alpha)\)-approximate Fr\'echet correspondence in
  time \(O((n^3 / \alpha^2) \log n)\).
  In particular, we get an \(O(n)\)-approximation in near-linear \(O(n \log n)\) time, a vast
  improvement over the previously best know result, a linear time \(2^{O(n)}\)-approximation.
  As part of our algorithm, we also describe how to turn any approximate decision
  procedure for the Fr\'echet distance into an approximate optimization algorithm whose
  approximation ratio is the same up to arbitrarily small constant factors.
  The transformation into an approximate optimization algorithm increases the running time of the
  decision procedure by only an \(O(\log n)\) factor.
\end{abstract}

\setcounter{page}{0}
\thispagestyle{empty}
\end{titlepage}

\pagestyle{myheadings}
\markboth{Connor Colombe and Kyle Fox}{Approximating the (Continuous) Fr\'{e}chet Distance}


\section{Introduction}
\label{sec:introduction}

The Fr\'echet distance is a commonly used method of measuring the similarity between a pair of
curves.
Both its standard (continuous) and discrete variants have seen use in map construction and
mapping~\cite{akpw-mca-15,cdgnw-ammrf-11}, handwriting recognition~\cite{skb-fdbas-07}, and protein
alignment~\cite{jxz-psadf-08}.

Formally, it is defined as follows:
Let \(P : [1, m] \to \R^d\) and \(Q : [1, n] \to \R^d\) be two curves in \(d\)-dimensional Euclidean
space.
We'll assume \(P\) and \(Q\) are represented as \EMPH{polygonal chains}, meaning there exist ordered
\EMPH{vertex} sequences \(\seq{p_1, \dots, p_m}\) and \(\seq{q_1, \dots, q_n}\) such that \(P(i) =
p_i\) for all \(1 \leq i \leq m\), \(Q(j) = q_j\) for all \(1 \leq j \leq n\), and both \(P\) and
\(Q\) are linearly parameterized along line segments or \EMPH{edges} between these positions.
We define a \EMPH{re-parameterization} \(\reparamP : [0, 1] \to [1, m]\) of \(P\) as any continuous,
non-decreasing function such that \(\reparamP(0) = 1\) and \(\reparamP(1) = m\).\footnote{%
Re-parameterizations are normally required to be bijective, but we relax this requirement to
simplify definitions and arguments throughout the paper.}
We define a re-parameterization \(\reparamQ : [0, 1] \to [1, n]\) of \(Q\) similarly.
We define a \EMPH{Fr\'echet correspondence} between \(P\) and \(Q\) as a pair \((\reparamP,
\reparamQ)\) of re-parameterizations of \(P\) and \(Q\) respectively, and we say any pair of reals
\((\reparamP(r), \reparamQ(r))\) for any \(0 \leq r \leq 1\) are \EMPH{matched} by the
correspondence.
Let \(\dist(p, q)\) denote the Euclidean distance between points \(p\) and \(q\) in \(\R^d\).
The \EMPH{cost} of the correspondence is defined as
\[\cost((\reparamP, \reparamQ)) := \max_{0 \leq r \leq 1} \dist(P(\reparamP(r)), Q(\reparamQ(r))).\]
Let \(\corrs_{\frechet}\) denote the set of all Fr\'echet correspondences between \(P\) and \(Q\).
The \EMPH{(continuous) Fr\'echet distance} of \(P\) and \(Q\) is defined as
\[ \frechet(P, Q) := \min_{(\reparamP, \reparamQ) \in \corrs_{\frechet}} \cost((\reparamP,
\reparamQ)).\]

The standard intuition given for this definition is to imagine a person and their dog walking along
\(P\) and \(Q\), respectively, without backtracking.
The person must keep the dog on a leash, and the goal is to pace their walks as to minimize the
length of leash needed to keep them connected.
There also exists a variant of the distance called the \EMPH{discrete Fr\'echet distance} where the
input consists of two finite \emph{point sequences}.
Here, we replace the person and dog by two frogs.
Starting with both frogs on the first point of their sequences, we must iteratively move the first,
the second, or both frogs to the next point in their sequences.
As before, the goal is to minimize the maximum distance between the frogs.

Throughout this paper, we assume \(2 \leq m \leq n\).
We can easily compute the \emph{discrete} Fr\'echet distance in~\(O(mn)\) time using dynamic
programming.
The first polynomial time algorithm for computing the continuous case was described by Alt and
Godau~\cite{ag-cfdbt-95}.
They use parametric search~\cite{c-sdsno-87,m-apcad-83} and a quadratic time decision procedure (see
Section~\ref{sec:preliminaries}) to compute the Fr\'echet distance in~\(O(mn \log n)\) time.
Almost two decades passed before Agarwal \etal~\cite{aaks-cdfds-14} improved the running time for
the discrete case to~\(O(mn \log \log n / \log n)\).
Buchin \etal~\cite{bbmm-fswdi-17} later improved the running time for the continuous case to~\(O(mn
(\log \log n)^2)\) (these latter two results assume we are working in the word RAM model of
computation).

Recently, Gudmundsson \etal~\cite{gmmw-ffdcl-19} described an~\(O(n \log n)\) time algorithm for
computing the continuous distance between chains~\(P\) and~\(Q\) assuming all edges have length a
sufficiently large constant larger than~\(\frechet(P, Q)\).
In short, having long edges allows one to greedily move the person and dog along their respective
chains while keeping their leash length optimal.

From this brief history, one may assume substantially faster algorithms  are finally forthcoming for
general cases of the continuous and discrete Fr\'echet distance.
Unfortunately, more meaningful improvements may not be possible;
Bringmann~\cite{b-wwdtt-14} showed that \EMPH{strongly subquadratic}
(\(n^{2-\Omega(1)}\)) time algorithms would violate the \emph{Strong Exponential Time
Hypothesis} (SETH) that solving CNF-SAT over \(n\) variables
requires~\(2^{(1-o(1))n}\) time~\cite{ip-ock-01}.

Therefore, we are motivated to look for fast \emph{approximation algorithms} for these problems.
Aronov \etal~\cite{ahkww-fdcr-06} described a~\((1+\eps)\)-approximation algorithm for the discrete
Fr\'echet distance.
This algorithm runs in subquadratic and often near-linear time if \(P\) or \(Q\) fall into one of a
few different ``realistic'' families of curves
such as ones modeling protein backbones.
Driemel \etal~\cite{dhw-afdrc-12} describe a \((1 + \eps)\)-approximation for the continuous
Fr\'echet distance that again runs more quickly if one of the curves belongs to a realistic family
than it would otherwise.
This latter algorithm was improved for some cases by Bringmann and K\"{u}nnemann~\cite{bk-iafdc-17}.
In the same work mentioned above, Gudmundsson \etal~\cite{gmmw-ffdcl-19} described a
\(\sqrt{d}\)-approximation algorithm that runs in linear time if the input polygonal chains have
sufficiently long edges.

Approximation appears more difficult when the input is arbitrary.
Bringmann~\cite{b-wwdtt-14} showed there is no strongly subquadratic time \(1.001\)-approximation
for the Fr\'echet distance, assuming SETH.
For arbitrary point sequences, Bringmann and Mulzer~\cite{bm-adfd-16} described an
\(O(\alpha)\)-approximation algorithm for the discrete distance for any \(\alpha \in [1, n / \log
n]\) that runs in \(O(n \log n + n^2 / \alpha)\) time.
Chan and Rahmati~\cite{cr-iaadf-18} later described an \(O(n \log n + n^2 / \alpha^2)\) time
\(O(\alpha)\)-approximation algorithm for the discrete distance for any \(\alpha \in [1, \sqrt{n /
\log n}]\).

For the \emph{continuous} Fr\'echet distance over arbitrary polygonal chains, the only strongly
subquadratic time algorithm known with bounded approximation ratio runs in linear time but has an
\emph{exponential} worst case approximation ratio of \(2^{\Theta(n)}\).
This result is described in the same paper of Bringmann and Mulzer~\cite{bm-adfd-16} mentioned
above.
We note that there is also a substantial body of work on the (approximate) nearest neighbor problem
using Fr\'echet distance as the metric;
see Mirzanezhad~\cite{m-annqc-20} for a survey of recent results.
These results assume the query curve or the curves being searched are short, so they do not appear
directly useful in approximating the Fr\'echet distance between two curves of arbitrary complexity.

The closely related problems of computing the dynamic time warping and geometric edit distances have
a similar history to that of the discrete Fr\'echet distance.\footnote{%
The dynamic time warping distance is defined similarly to the discrete Fr\'echet distance, except
the goal is to minimize the \emph{sum} of distances between the frogs over all pairs of points they
stand upon.
The geometric edit distance can be defined as the minimum number of point insertions and deletions
plus the minimum total cost of point substitutions needed to transform one input sequence into
another.
The cost of a substitution is the distance between its points.}
They have straightforward quadratic time dynamic programming algorithms that have been improved by
(sub-)polylogarithmic factors for some low dimensional cases~\cite{gs-dtwge-18};
substantial improvements to these algorithms violate SETH or other complexity theoretic
assumptions~\cite{bi-edccs-18,abw-thrlo-15,bk-qclbs-15,ahww-sbped-16};
and there are fast \((1 + \eps)\)-approximation algorithms specialized for realistic input
sequences~\cite{afpy-adtwe-16,ypfa-seaad-16}.
And, there exist some approximation results for arbitrary point sequences as well.
Kuszmaul~\cite{k-dtwss-19} described \(O((n^2 / \alpha) \polylog n)\) time
\(O(\alpha)\)-approximation algorithms for dynamic time warping distance over point sequences in
\emph{well separated tree metrics} of exponential spread and geometric edit distance over point
sequences in arbitrary metrics.
Fox and Li~\cite{fl-aged-19} described a randomized \(O(n \log^2 n + (n^2 / \alpha^2) \log n)\) time
\(O(\alpha)\)-approximation algorithm for geometric edit distance for points in low dimensional
Euclidean space.
Even better approximation algorithms exist for the traditional string edit distance where all
substitutions have cost exactly~\(1\);
see, for example, Andoni and Nosatzki~\cite{an-edntc-20}.

Each of the above problems for point \emph{sequences} admit strongly subquadratic approximation
algorithms with polynomial approximation ratios when the input comes from low dimensional Euclidean
space.
However, such a result remains conspicuously absent for the continuous Fr\'echet distance over
arbitrary polygonal chains.
One may naturally assume results for the discrete Fr\'echet distance
extend to the continuous case.
However, one advantage of discrete Fr\'echet distance over the continuous case is that input points
can only be matched with other input points.
The fact that vertices can match with edge interiors in the continuous case makes it much more
difficult to make approximately optimal decisions.
In addition, we can no longer depend upon certain data
structures for testing equality of subsequences in constant time.
These structures are largely responsible for the relatively small running times seen in the
algorithms of Chan and Rahmati~\cite{cr-iaadf-18} and Fox and Li~\cite{fl-aged-19}.

\subsection*{Our results}

We describe the first strongly subquadratic time algorithm with subexponential approximation ratio
for computing Fr\'echet correspondences between polygonal chains.
Let~\(P\) and~\(Q\) be two polygonal chains of~\(m\) and~\(n\) vertices, respectively,
in \(d\)-dimensional Euclidean space, and let \(\alpha \in [\sqrt{n}, n]\).
Again, we assume \(m \leq n\).
Our algorithm deterministically finds a Fr\'echet correspondence between~\(P\) and~\(Q\) of cost
\(O(\alpha)\cdot\frechet(P, Q)\) in time \(O((n^3 / \alpha^2) \log n)\).
In particular, we get an \(O(n)\)-approximation in near-linear \(O(n \log n)\) time, a vast
improvement over Bringmann and Mulzer's~\cite{bm-adfd-16} linear time \(2^{O(n)}\)-approximation for
continuous Fr\'echet distance.
Our algorithm employs a novel combination of ideas from the original exact algorithm of Alt and
Godau~\cite{ag-cfdbt-95} for continuous Fr\'echet distance, the algorithm of Chan and
Rahmati~\cite{cr-iaadf-18} for approximating the discrete Fr\'echet distance, and Gudmundsson
\etal's~\cite{gmmw-ffdcl-19} greedy approach for computing the Fr\'echet distance between chains
with long edges.

Let \(\delta > 0\).
We describe an \EMPH{approximate decision procedure} that either determines \(\frechet(P, Q) >
\delta\) or finds a Fr\'echet correspondence of cost \(O(\alpha) \cdot \delta\).
The \emph{exact} decision procedure of Alt and Godau~\cite{ag-cfdbt-95} computes a set of
\emph{reachability intervals} in the \emph{free space diagram} of \(P\) and \(Q\) with respect to
\(\delta\) (see Section~\ref{sec:preliminaries}).
These intervals represent all points on a single edge of \(Q\) that can be matched to a vertex of
\(P\) (or vice versa) in a Fr\'echet correspondence of cost at most \(\delta\).
For our approximate decision procedure, we compute a set of \emph{approximate reachability
intervals} such that the re-parameterizations realizing these intervals have cost \(O(\alpha) \cdot
\delta\).
We cannot afford to compute intervals for all \(\Theta(mn)\) vertex-edge pairs, so we instead focus
on \(O(n^2 / \alpha^2)\) vertex-edge pairs as described below that contain the first and last
vertices and edges of both chains.
The approximate interval we compute for any vertex-edge pair contains the exact interval for that
same pair.
So if \(\frechet(P, Q) \leq \delta\), we are guaranteed \((p_m, q_n)\) is approximately reachable
and our desired Fr\'echet correspondence exists.

The vertex-edge pairs chosen to hold the approximate reachability intervals follow from ideas of
Chan and Rahmati~\cite{cr-iaadf-18}.
Similar to them, we place a grid of side length \(\alpha \cdot \delta\) so that at most
\(O(n / \alpha)\) vertices of \(P\) and \(Q\) lie within distance \(3\delta\) of the side of a grid
box.  We call these \(O(n / \alpha)\) vertices \emph{bad} and the rest \emph{good}.
Also, we call any edge with a bad endpoint bad.
Our approximate reachability intervals involve only bad edges and vertices with at least one bad
incident edge.
To compute these intervals, we describe a method for tracing how a Fr\'echet correspondence of cost
\(\delta\) must behave starting from one approximate reachability interval until it reaches some
others we wish to compute.
Recall, an approximate reachability interval corresponds to pairs of points on \(P\) and \(Q\) that
could be matched together.
Either the next edge of \(P\) or \(Q\) after one of these pairs to leave a box is good and therefore
long, or it is bad, and we can afford to compute some new approximate reachability intervals using
this edge.
We can easily compute correspondences between long edges and arbitrary length edges on the other
curve, and we can greedily match the portions of the curves before they leave the box at cost at
most \(O(\alpha)\cdot \delta\).
The traces take only \(O(n)\) time each, and we perform at most \(O(n^2 / \alpha^2)\) traces, so our
decision procedure takes \(O(n^3 / \alpha^2)\) time total.

We would like to use our approximate decision procedure as a black box to compute a Fr\'echet
correspondence of cost \(O(\alpha) \cdot \frechet(P, Q)\) without knowing \(\frechet(P, Q)\) in
advance.
Unfortunately, we are unaware of any known general method to do so.\footnote{%
Bringmann and K\"{u}nnemann~\cite[Lemma 2.1]{bk-iafdc-17} claim there exists a general method for
turning an approximate decision procedure into an approximate optimization algorithm when
the approximation ratio of the decision procedure is at most \(2\).
However, they rely on a method of Driemel \etal~\cite{dhw-afdrc-12} that uses certain structural
properties of the input polygonal chains that we cannot assume.}
Therefore, we describe how to turn any approximate decision procedure into an algorithm with the
same approximation ratio up to arbitrarily small constant factors after an~\(O(\log n)\) factor
increase in running time.
In particular, any improvement to our approximate decision procedure would immediately carry over to
our overall approximation algorithm.
Our method involves binary searching over a set of \(O(n)\) values approximating distances between
pairs of vertices.
If there is a large gap between the Fr\'echet distance and the nearest of these values, we can
\emph{simplify} both \(P\) and \(Q\) without losing much accuracy in the Fr\'echet distance
computation while allowing us to use the long edge exact algorithm of Gudmundsson
\etal~\cite{gmmw-ffdcl-19}.

The rest of our paper is organized as follows.
We describe preliminary notions in Section~\ref{sec:preliminaries}.
We describe our decision procedure in Section~\ref{sec:decision_procedure} and how to turn it into
an approximation algorithm in Section~\ref{sec:approximation_algorithm}.
We conclude with some closing thoughts in Section~\ref{sec:conclusion}.

\section{Preliminaries}
\label{sec:preliminaries}

Let \(R : [1, n] \to \R^d\) be a polygonal chain in \(d\)-dimensional Euclidean space.
We let \(R[r, r']\) denote the restriction of \(R\) to \([r, r']\).
In other words, the notation refers to the portion of \(R\) between points \(R(r)\) and \(R(r')\).
We generally use \(s\) to refer to members of the domain of a polygonal chain \(P\) and \(t\) to
refer to members of the domain of a polygonal chain \(Q\).
We use \(i\) and \(j\), respectively, when these members are integers.
Recall, \(p_i = P(i)\) for all \(1 \leq i \leq m\) and \(q_j = Q(j)\) for all \(1 \leq j \leq n\).
We use superscript notation (\(s^a\)) to label particular members of these domains (and \emph{not}
to take the \(a\)th power of \(s\)), and we use subscript notation (\(s_k\)) when we are working
with an ordered list of these members.


\paragraph*{Free space diagram and reachability}

Let \(P : [1, m] \to \R^d\) and \(Q : [1, n] \to \R^d\) be polygonal chains.
Fix some \(\delta > 0\).
Alt and Godau~\cite{ag-cfdbt-95} introduced the \EMPH{free space diagram} to decide if \(\frechet(P,
Q) \leq \delta\).
It consists of a set of pairs \(F = \Set{(s, t) \in [1, m] \times [1, n]}\).
Each \((s, t) \in F\) represents the pair of points \(P(s)\) and \(Q(t)\).
Point \((s, t) \in F\) is \EMPH{free} if \(\dist(P(s), Q(t)) \leq \delta\).
The \EMPH{free space} \(\mathcal{D}_{\leq \delta}(P,Q)\) consists of all free points between $P$ and
$Q$ for a given $\delta$.
Formally, it is given by the set \(\mathcal{D}_{\leq \delta}(P,Q) := \{(s,t) \in [1,m] \times [1,n]:
\dist(P(s), Q(t)) \leq \delta\}\).
We say that a point \((s', t') \in F\) is \EMPH{reachable} if there exists an \(s\) and
\(t\)-monotone path from $(1,1)$ to $(s',t')$ through \(\mathcal{D}_{\leq \delta}(P, Q)\).

The standard procedure for determining if \(\frechet(P, Q) \leq \delta\) divides \(F\) into cells
\(C_{i,j} := [i - 1, i] \times [j - 1, j]\) for all \(i \in \Seq{2, \dots, m}\) and \(j \in \Seq{2,
\dots, n}\).
The intersection of a cell \(C_{i,j}\) with the free space is convex~\cite{ag-cfdbt-95}.
The intersection of an edge of the free space diagram cell $C_{i,j}$ with the free space forms a \EMPH{free space interval}.
The subset of reachable points within a free space interval form what is called an (exact)
\EMPH{reachability interval}.
We say a Fr\'echet correspondence \((\reparamP, \reparamQ)\) between \(P\) and \(Q\) \EMPH{uses} or
\EMPH{passes through} a reachability interval if there exists some point \((\reparamP(r),
\reparamQ(r))\) within that interval.

Given the bottom and left reachability intervals of a free space diagram cell, we can compute the
top and right reachability intervals of the same cell in $O(1)$ time~\cite{ag-cfdbt-95}.
The exact decision procedure loops through the cells in increasing order of \(i\) and \(j\),
computing reachability intervals one-by-one.
Let \(\alpha \in [\sqrt{n}, n]\).
We cannot afford to compute all \(\Theta(mn)\) reachability intervals, so instead we compute \(O(n^2
/ \alpha^2)\) \EMPH{(\(\alpha\))-approximate reachability intervals}.
The approximate reachability intervals are subsets of the free space intervals such that for any
point \((s, t)\) on an approximate reachability interval, there exists a Fr\'echet correspondence
between \(P[1, s]\) and \(Q[1, t]\) of cost \(O(\alpha)\cdot \delta\).
We express exact or approximate reachability intervals by the subset of \(F\) they contain;
for example, given \(j - 1 \leq t^a \leq t^b \leq j\), we will use \(\Set{i} \times [t^a, t^b]\) to
refer to an interval on the right side of cell \(C_{i,j}\).

\paragraph*{Grids, good points, bad points, and dangerous points}
 Chan and Rahmati~\cite{cr-iaadf-18} utilize a $d$-dimensional grid to create the useful notion of
 good and bad vertices for their discrete Fr\'echet distance approximation algorithm.
 We adopt their use of a $d$-dimensional grid.
 Unlike Chan and Rahmati, however, we are no longer working with sequences of discrete points but
 instead polygonal chains.
 We must therefore define new constructs of good and bad that work better for our problem's input.

 Let \(P : [1, m] \to \R^d\) and \(Q : [1, n] \to \R^d\) be two polygonal chains.
 Fix \(\delta > 0\) and \(\alpha \in [\sqrt{n}, n]\).
 Let $G$ be a $d$-dimensional grid consisting of \EMPH{boxes} of side length $\alpha \cdot \delta$.
 (We do not use the term \emph{cell} here to avoid confusion with the free space diagram.)
 We say a vertex of \(P\) or \(Q\) is \EMPH{good} if it is more than distance $3\delta$ from any
 edge of $G$.
 If a vertex is not good, then we call it \EMPH{bad}.
 For simplicity, we also designate \(p_1\), \(q_1\), \(p_m\), and \(q_n\) as bad, regardless of
 their position within boxes of \(G\).

We also extend the constructs of good and bad to the edges of \(P\) and \(Q\).
We say an edge on either chain is \EMPH{good} if both its endpoints are good vertices.
Otherwise, the edge is \EMPH{bad}.
Lastly, we say that a vertex is \EMPH{dangerous} (but not necessarily good or bad) if at least one of its incident edges is bad.
Chan and Rahmati~\cite[Lemma 1]{cr-iaadf-18} demonstrate how to compute a grid \(G\) with \(O(n /
\alpha)\) bad vertices in \(O(n)\) time.
Because each bad vertex has up to two incident edges, there are also $O(n/\alpha)$ bad edges.
Each bad edge is incident to two vertices, so there are $O(n/\alpha)$ dangerous vertices as well.
Our approximate decision procedure will compute approximate reachability intervals only between
dangerous vertices and bad edges.
Therefore, there will be at most \(O(n^2 / \alpha^2)\) such intervals.

\paragraph*{Curve simplification}
Let \(R : [1, n] \to \R^d\) be a polygonal chain with vertices \(\Seq{r_1, \dots, r_n}\).
Our approximation algorithm relies on a method for simplifying chains so their edges are not too
short.
We slightly modify of a procedure of Driemel~\etal~\cite{dhw-afdrc-12}.
Let \(\nu > 0\) be a parameter.
We mark \(r_1\) and set it as the \emph{current vertex}.
We then repeat the following procedure until we no longer have a designated current vertex.
We scan \(R\) from the current vertex until reaching the first vertex \(r_i\) of distance at least
\(\nu\) from the current vertex.
We mark \(r_i\), set it as the current vertex, and perform the next iteration of the loop.
The \EMPH{\(\nu\)-simplification} of \(R\), denoted \(\hat{R}\), is the polygonal chain consisting of
exactly the marked vertices in order.
Note that unlike Driemel~\etal~\cite{dhw-afdrc-12}, we do not require the final vertex of \(R\) to
be marked.
We can easily verify that all edges of \(\hat{R}\) have length at least \(\nu\).
Also, \(\frechet(R, \hat{R}) \leq \nu\)~\cite[Lemma 2.3]{dhw-afdrc-12}.

\section{Approximate Decision Procedure}
\label{sec:decision_procedure}

In this section, we present our \(O(\alpha)\)-approximate decision procedure.
Let \(P : [1, m] \to \R^d\) and \(Q : [1, n] \to \R^d\) be two polygonal chains in \(d\)-dimensional
Euclidean space as defined before, and let \(\alpha \in [\sqrt{n}, n]\).
Let \(\delta > 0\).
We begin by computing the grid~\(G\) along with \(O(n / \alpha)\) bad edges and points as defined in
Section~\ref{sec:preliminaries}.
We then explicitly compute and record a set of \(O(n^2 / \alpha^2)\) approximate reachability
intervals between dangerous vertices and bad edges.
To compute these intervals, we occasionally perform a linear time greedy search for a good
correspondence.
We describe this greedy search procedure in Section~\ref{sec:decision_procedure-greedy_mapping}
before giving the remaining details of the decision procedure in
Section~\ref{sec:decision_procedure-decision_details}.

\subsection{Greedy mapping subroutines}
\label{sec:decision_procedure-greedy_mapping}

We describe a pair of subroutines for greedily computing Fr\'echet correspondences along lengths of
\(P\) and \(Q\).
The first of these procedures \(\textsc{GreedyMappingP}(i, t)\) takes as its input an integer \(i
\in \Seq{1, \dots, m}\) such that \(p_i\) is a good vertex of \(P\) along with a real value \(t \in
[1, n]\) such that \(\dist(p_i, Q(t)) \leq \delta\).
Informally, the procedure does the following:
Suppose there exists a Fr\'echet correspondence \((\reparamP, \reparamQ)\) between \(P\) and \(Q\)
of cost at most \(\delta\) that maps \(p_i\) `close to' \(Q(t)\).
Procedure \(\textsc{GreedyMappingP}(i, t)\) essentially follows \(P\) and \(Q\) from box to box,
discovering groups of points that must be matched by \((\reparamP, \reparamQ)\).
When there is too much ambiguity in what must be matched to continue searching greedily, it outputs
a set of approximate reachability intervals, including one used by \((\reparamP, \reparamQ)\).
While we can infer which boxes pairs of matched points belong to, it may still be unclear exactly
which pairs appear in \((\reparamP, \reparamQ)\).
Also, the procedure may output intervals despite \((\reparamP, \reparamQ)\) not existing in the
first place!
Therefore, we can only guarantee the intervals can be reached using a correspondence of cost at most
\(O(\alpha)\cdot\delta\).
We define another procedure \(\textsc{GreedyMappingQ}(j, s)\) similarly, exchanging the roles of
\(P\) and \(Q\).
As they are rather technical, the precise definitions of these procedures are best expressed in the
following lemmas.
\begin{lemma}
  \label{lem:greedy_output}
  Let \(i \in \Seq{1, \dots, m}\) and \(t \in [1, n]\) such that \(p_i\) is good and \(\dist(p_i,
  Q(t)) \leq \delta\).
  Procedure \(\textsc{GreedyMappingP}(i, t)\) outputs zero or more approximate reachability
  intervals between a bad edge of \(P\) or \(Q\) and a dangerous vertex of \(Q\) or \(P\),
  respectively.
  For each pair \((s', t') \in [i, m] \times [t, n]\) in an approximate reachability interval
  computed by the procedure, there exists a Fr\'echet correspondence of cost
  \(O(\alpha)\cdot\delta\) between \(P[i, s']\) and \(Q[t, t']\).
  Procedure \(\textsc{GreedyMappingQ}(j, s)\) has the same properties with the roles of \(P\) and
  \(Q\) exchanged.
\end{lemma}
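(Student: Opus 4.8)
The plan is to first pin down a precise iterative definition of \textsc{GreedyMappingP} and then verify both asserted properties through a single structural invariant maintained along the trace. The geometric fact driving everything is that a good vertex lies at distance more than $3\delta$ from every edge of the grid $G$: hence $Q(t)$, being within $\delta$ of the good vertex $p_i$, lies in the same grid box $b$ as $p_i$, and in fact every point within $3\delta$ of $p_i$ lies strictly inside $b$. I would define the procedure to maintain a constant-size set of \emph{active pairs}, each a free point $(s,t) \in \mathcal{D}_{\leq \delta}(P,Q)$ lying in a common box, starting from the single pair $(i,t)$, and to repeatedly \emph{advance} each active pair as follows. From an active pair $(s,t)$ in a box $b'$, walk forward along $P$ from $P(s)$ and along $Q$ from $Q(t)$ until one of the two curves is about to leave $b'$; call the edge about to leave $e$. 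The portions of $P$ and of $Q$ traversed in this step both lie inside $b'$, so they can be matched by a greedy sweep at cost at most the diameter of $b'$, which is $\sqrt{d}\cdot\alpha\delta = O(\alpha)\cdot\delta$ for fixed $d$; this within-box greedy matching is precisely where the $O(\alpha)$ factor enters. The advance then splits into two cases according to whether $e$ is good or bad.

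Suppose $e$ is good. Then both endpoints of $e$ are good, hence more than $3\delta$ from every grid edge, and since $e$ crosses a grid edge it has length more than $6\delta$; that is, $e$ is long. Here I would argue, in the style of Gudmundsson \etal~\cite{gmmw-ffdcl-19}, that a long edge essentially forces the other curve: any correspondence of cost at most $\delta$ consistent with the current pair must keep the other curve inside the radius-$\delta$ tube around the segment $e$, so the other curve can be matched to $e$ by a monotone sweep at cost $O(\delta)$ ending at a well-defined pair again contained in a common box. The procedure therefore advances the active pair deterministically across $e$, outputs nothing, and the invariant ``the traversed prefixes of $P$ and $Q$ admit a correspondence of cost $O(\alpha)\cdot\delta$'' is preserved by concatenating the within-box matching with the $O(\delta)$-cost sweep.

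Suppose instead $e$ is bad. Since both endpoints of a bad edge are dangerous, $e$ paired with the current vertex of the opposite chain (also a dangerous vertex, or a chain endpoint we may treat as one) is exactly the kind of (bad edge, dangerous vertex) pair for which the procedure is permitted to record an interval. I would have the procedure compute in $O(1)$ time the reachability interval on $e$ witnessed by the within-box greedy matching --- the set of points on $e$ matchable to the opposite vertex by a cost-$O(\alpha)\delta$ correspondence of the prefixes --- output it, and spawn $O(1)$ new active pairs just past $e$ from which the trace continues. Soundness of the lemma for an output point $(s',t')$ then follows by induction on the number of advance steps: a witnessing correspondence for $P[i,s']$ and $Q[t,t']$ is the concatenation of the within-box greedy matchings (each of cost $O(\alpha)\delta$), the long-edge sweeps (each of cost $O(\delta)$), and the final matching realizing $(s',t')$ on $e$, whose overall cost is the maximum of the pieces, namely $O(\alpha)\cdot\delta$. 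By construction every output interval is between a bad edge and a dangerous vertex, and the procedure outputs none when the trace first runs off the end of $P$ or $Q$; \textsc{GreedyMappingQ} is the mirror image with the roles of $P$ and $Q$ exchanged.

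For termination and timing, each advance step moves strictly forward along both $P$ and $Q$, and a charging argument shows the active pairs together scan each parameter of $P$ and of $Q$ only $O(1)$ times, so the procedure halts after $O(n)$ constant-time steps. I expect the main obstacle to be the bad-edge case: making precise what ``about to leave the box'' should mean (a cost-$\delta$ correspondence may linger near a box boundary, so the bookkeeping of which edge counts as $e$ and where the new active pairs sit needs care), computing the $O(1)$ interval on $e$ correctly, and --- for this procedure to be usable in the overall decision procedure --- checking that this interval does not miss any genuine cost-$\delta$ behaviour. Hand in hand with that is making the ``long edge forces the other curve'' claim of the good-edge case quantitatively tight. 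Once these local claims are settled, the global invariant together with the cost and running-time bounds follow by routine induction.
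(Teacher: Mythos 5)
Your cost argument is essentially the paper's: inside a single grid box the traversed prefixes of \(P\) and \(Q\) can be matched arbitrarily at cost at most the box diameter \(\sqrt{d}\,\alpha\delta\), each good-edge piece contributes only \(O(\delta)\), and since Fr\'echet cost is a maximum the concatenation costs \(O(\alpha)\cdot\delta\). That part is sound. But two genuine gaps remain, and you have also quietly replaced the procedure the lemma is about with a different one.

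First, the assertion that every output interval lies between a bad edge and a dangerous vertex is exactly the half of the lemma you dismiss ``by construction.'' The exit edge \(e\) being bad handles only one side of the pair; you must also show that the edge of the \emph{opposite} curve carrying the interval is bad. In the paper this requires an argument: that edge contains a point within \(\delta\) of the box boundary, a good edge with both endpoints inside \(B\) stays more than \(3\delta\) from the boundary, and a good edge leaving \(B\) would have triggered the good-edge (recursive) case instead; hence the opposite edge is bad and its endpoint vertex is dangerous. Without this, intervals could land on good edges and the \(O(n^2/\alpha^2)\) bound on the number of intervals collapses. Second, your bad-edge step is a different algorithm: the paper's procedure enumerates \emph{all} (possibly \(\Theta(n)\), not \(O(1)\)) bad edges of the opposite curve that come within \(\delta\) of the exit point, computes an interval in each corresponding free-space cell using convexity of the free space within a cell, and then \emph{terminates}; continuation is the outer decision procedure's job. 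Spawning \(O(1)\) active pairs and continuing past \(e\) is not the stated procedure, and the claimed \(O(1)\) bound on active pairs and \(O(n)\) running time do not follow from the charging sketch. Relatedly, when one curve ends inside the box the paper outputs the trivial interval \(\Set{(m,n)}\) (after checking the other curve stays within \(\delta\) of \(B\)); outputting nothing there breaks the downstream use of the lemma. Finally, the ``long edge forces the other curve'' claim belongs to the companion guarantee lemma, not here: for soundness the procedure explicitly verifies \(\frechet(P[i^e-1,i^e],Q[t^c,t^f])\le\delta\) before recursing, and that check alone licenses the \(O(\delta)\) piece.
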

\begin{lemma}
  \label{lem:greedy_guarantee}
  Let \(i \in \Seq{1, \dots, m}\) and \(t \in [1, n]\) such that \(p_i\) is good and \(\dist(p_i,
  Q(t)) \leq \delta\).
  Suppose there exists a Fr\'echet correspondence \((\reparamP, \reparamQ)\) between \(P\)
  and \(Q\) of cost at most \(\delta\) that matches \(i\) with some \(t^* \geq t\) such that
  every point of \(Q[t, t^*]\) is at most distance \(3\delta\) from \(p_i\).
  Then, \((\reparamP, \reparamQ)\) passes through at least one approximate reachability interval
  output by procedure \(\textsc{GreedyMappingP}(i, t)\).
  Procedure \(\textsc{GreedyMappingQ}(j, s)\) has the same properties with the roles of \(P\) and
  \(Q\) exchanged.
\end{lemma}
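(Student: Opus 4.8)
Since the two lemmas together serve as the definition of the greedy subroutines, the plan is to give the precise description of \(\textsc{GreedyMappingP}\) in the proof of Lemma~\ref{lem:greedy_output} and then derive Lemma~\ref{lem:greedy_guarantee} by running the hypothesized correspondence \((\reparamP,\reparamQ)\) alongside the procedure's execution. The procedure keeps a ``current pair'' \((s,t)\), initialized to \((i,t)\); since \(p_i\) is good, the radius-\(3\delta\) ball around \(p_i\) lies in a single box \(b\), and \(Q(t)\) lies in \(b\) as well. It then repeatedly advances along \(P\) and \(Q\) while both remain in the current box. Whenever the next edge of \(P\) or of \(Q\) that exits the box is good, that edge is long --- its endpoints straddle a grid hyperplane while each sits at distance more than \(3\delta\) from it, so its length exceeds \(6\delta\) --- and a Gudmundsson~\etal-style greedy rule matches it against the opposite chain without ambiguity, carrying the trace into the next box. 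When the next exiting edge is instead bad, the matching can no longer be pinned down, so the procedure records every free space interval on that bad edge reachable within the current box from \((s,t)\), paired with the dangerous endpoint of the bad edge (respectively with a dangerous vertex reached on the opposing chain); these are the output intervals, and the trace then halts. Procedure \(\textsc{GreedyMappingQ}\) is defined symmetrically.

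To prove Lemma~\ref{lem:greedy_guarantee} I would maintain the invariant that, as the trace advances, there is a nondecreasing parameter \(r\) with \((\reparamP(r),\reparamQ(r))\) lying in the trace's current box and consistent with its current pair \((s,t)\), unless \((\reparamP,\reparamQ)\) has already passed through one of the output intervals. The base case comes straight from the hypothesis: \(\reparamP\) matches \(i\) with some \(t^*\ge t\) and \(Q[t,t^*]\) stays within \(3\delta\) of \(p_i\), hence inside \(b\), so by the intermediate value theorem some parameter of \((\reparamP,\reparamQ)\) realizes the box-level configuration of \((i,t)\). For the inductive step I would argue that every advance the trace makes is \emph{forced} for any correspondence of cost at most \(\delta\): within a box, continuity of \(Q\) together with the leash of length at most \(\delta\) keeps \((\reparamP,\reparamQ)\) in step with the trace across short good edges, and along a long good edge the rigidity of cost-at-most-\(\delta\) matchings on edges of length more than \(6\delta\) (such a correspondence can neither skip the edge nor traverse it against the greedy rule) forces \((\reparamP,\reparamQ)\) to obey the same rule the trace uses. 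Therefore, when the trace first meets a bad edge and emits its intervals, the parameter \(r\) places \((\reparamP(r),\reparamQ(r))\) on that bad edge and inside the current box; since the trace outputs \emph{all} free space intervals on the bad edge reachable within that box from the current pair, \((\reparamP(r),\reparamQ(r))\) lies on one of them, which is the claim. The trace terminates before \(r\) can reach \(1\) because \(p_m\) and \(q_n\) are bad, so a bad edge is always eventually met; the \(O(\alpha)\cdot\delta\) cost and \(O(n)\) running time belong to Lemma~\ref{lem:greedy_output} and are argued there.

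The main obstacle I anticipate is the inductive step's case analysis, and in particular the mixed configurations in which \((\reparamP(r),\reparamQ(r))\) sits in an edge interior of one chain while the other chain is at a vertex, a phenomenon absent in the discrete setting and the very reason the continuous decision procedure is delicate. ``Consistent with the trace'' must be stated carefully enough to survive these configurations yet remain strong enough to force the correspondence onto an output interval, and getting the direction of synchronization right across a long good edge is part of this: the greedy rule may push the trace past the true correspondence's position within a box before the two reconverge at the next bad edge. I would isolate the long-edge rigidity claim as a short standalone sub-lemma, an adaptation of Gudmundsson~\etal's greedy lemma to a single long edge lying inside a box of side length \(\alpha\delta\), after which the remaining work is bookkeeping: checking that every box-internal portion of \((\reparamP,\reparamQ)\) is accounted for either by an already-output interval or by the still-advancing trace, and that synchronization is handed off correctly from one box to the next.
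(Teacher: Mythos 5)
Your architecture matches the paper's: a box-to-box trace that advances unambiguously across good exit edges, emits intervals when it meets a bad exit edge, and an induction synchronizing the hypothesized correspondence \((\reparamP,\reparamQ)\) with the trace. But two of the steps you defer are exactly where the proof lives, and as sketched they have gaps. First, the good-edge case. The procedure does not match the long edge ``without ambiguity'': it \emph{guesses} endpoints, taking \(t^f\) as the \emph{first} point of \(Q\) within \(\delta\) of \(p_{i^e}\) and \(t^c\) as the last point before \(t^f\) within \(\delta\) of \(p_{i^e-1}\), and it outputs \emph{zero} intervals unless its feasibility checks (every point of \(Q[t,t^c]\) near \(B\), and \(\frechet(P[i^e-1,i^e],Q[t^c,t^f])\le\delta\)) succeed. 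So the proof must (i) show these checks pass under the lemma's hypothesis --- which the paper does by explicitly building a cost-\(\delta\) correspondence between \(P[i^e-1,i^e]\) and \(Q[t^c,t^f]\) that clamps the true correspondence at \(t^c\) and \(t^f\) --- and (ii) establish the bracketing \(t^{c*}\le t^c<t^f\le t^{f*}\) against the true matched parameters, which follows not from edge \emph{length} but from box membership (a point within \(\delta\) of a good vertex lies in that vertex's box, and the two endpoints of a good exit edge lie in distinct boxes). One then shows \(Q[t^f,t^{f*}]\) stays within \(3\delta\) of \(p_{i^e}\), which is precisely why the inductive hypothesis carries the \(3\delta\) slack: the greedy choice makes the trace \emph{lag behind} the true correspondence (your sketch has the trace overshooting, which is the wrong direction). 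Your ``long-edge rigidity'' sub-lemma, as stated, does not supply (i) or (ii).

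Second, the bad-edge case. Your procedure records intervals ``on that bad edge reachable within the current box,'' but the correct construction is two-sided: since one cannot know whether the reparameterized walk exits \(B\) along \(P\) or along \(Q\) first, the procedure must enumerate both the bad edges of \(Q\) passing within \(\delta\) of \(P(s^e)\) \emph{and} the edges of \(P\) passing within \(\delta\) of \(Q(t^e)\), emitting intervals on the top and right sides of each resulting free-space cell. The guarantee then splits on whether \(r^{se}=\reparamP^{-1}(s^e)\le r^{te}=\reparamQ^{-1}(t^e)\) or not, and uses convexity of the free space within a single cell to conclude that the correspondence's crossing point lies in one of the computed intervals. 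With only a one-sided enumeration (or with ``reachable from \((s,t)\)'' left undefined), the correspondence can exit the box along the other curve and miss every output interval. These are concrete missing ideas rather than bookkeeping, so the proposal as written would not close.
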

We now provide details on the implementation of \(\textsc{GreedyMappingP}(i, t)\) along with
intuition for the steps it uses.
Procedure \(\textsc{GreedyMappingQ}(j, s)\) has an analogous description, with the roles of \(P\)
and \(Q\) exchanged.

To begin, observe \(p_i\) and \(Q(t)\) lie in the same box \(B\) of grid \(G\), because \(p_i\) is
good and \(\dist(p_i, Q(t)) \leq \delta\).
We first follow \(P\) and \(Q\) to see where they leave \(B\):
Let \(s^e = m\) if \(P\) never leaves \(B\) after \(p_i\).
Otherwise, let \(s^e\) be the minimum value in \((i, m]\) such that \(P(s^e)\) lies on the
boundary of \(B\) (the `\(e\)' stands for \emph{exit}).
Define \(t^e\) similarly for \(Q\).
See Figure~\ref{fig:greedy_definitions}.

If either curve ends before leaving \(B\), then the rest of the other curve needs to stay near \(B\)
if a correspondence like in Lemma~\ref{lem:greedy_guarantee} exists.
Therefore, if \(s^e = m\) (resp. \(t^e = n\)), we check if all points of \(Q[t, n]\) (resp. \(P[i,
m]\)) lie in or within distance \(\delta\) of \(B\).
If so, we output the trivial approximate reachability interval of \(\Set{(m, n)}\) and terminate the
procedure.
Otherwise, we output zero approximate reachability intervals.

From here on, we assume \(s^e \neq m\) and \(t^e \neq n\).
Let \(i^e \in \Seq{1, \dots, m}\) such that \(i^e - 1 \leq s^e \leq i^e\), and define \(j^e\)
similarly.
We begin by considering cases where a curve leaves box~\(B\) along a good edge.
Here, a correspondence as described in Lemma~\ref{lem:greedy_guarantee} must match a portion of the
other curve to the good edge.
Fortunately, we can guess approximately where that portion of the other curve begins and ends.
Afterward, the other endpoint of the good edge serves as a suitable parameter for a recursive call
to one of our greedy mapping procedures.

\begin{figure}[t]
\centering
\includegraphics[height=2in]{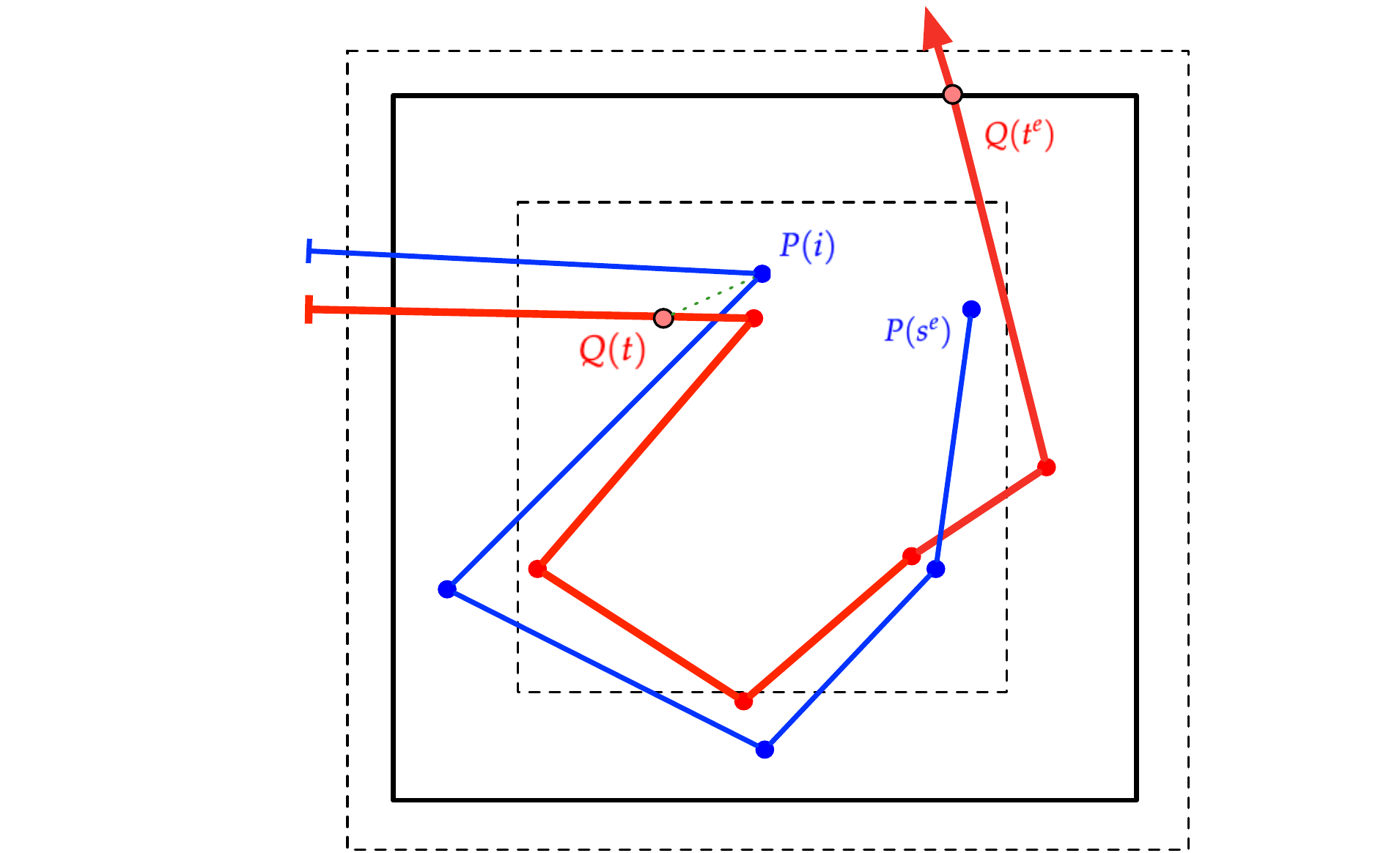}
\caption{Basic setup for \(\textsc{GreedyMappingP}(i, t)\)}
\label{fig:greedy_definitions}
\end{figure}

Specifically, suppose edge \(P[i^e-1, i^e]\) is good.
In this case, let \(t^f\) be the minimum value in \((t, n]\) such that \(\dist(p_{i^e},
Q(t^f)) \leq \delta\), and let \(t^c\) be the \emph{maximum} value in \([t, t^f)\) such
that \(\dist(p_{i^e-1}, Q(t^c)) \leq \delta\) (the `\(f\)' stands for \emph{far}, and the `\(c\)'
stands for \emph{close}).
See Figure~\ref{fig:greedy_good}.
We check if every point of \(Q[t, t^c]\) lies in or within distance \(\delta\) of \(B\)
and if \(\frechet(P[i^e-1, i^e], Q[t^c, t^f]) \leq \delta\).
If so, we run \(\textsc{GreedyMappingP}(i^e, t^f)\) and use its output.
Otherwise, we output zero approximate reachability intervals.

Now suppose the previous case does not hold but edge \(Q[j^e - 1, j^e]\) is good.
Here, we perform similar steps to those described in the previous case, exchanging the roles of
\(P\) and \(Q\).
Specifically, we let \(s^f\) be the minimum value in \((i, m]\) such that \(\dist(q_{j^e}, P(s^f))
\leq \delta\), and let \(s^c\) be the maximum value in \([i, s^f)\) such that \(\dist(q_{j^e-1},
Q(s^c)) \leq \delta\).
We check if every point of \(P[i, s^c]\) lies in or within distance \(\delta\) of \(B\) and if
\(\frechet(P[s^c, s^f], Q[j^e-1, j^e]) \leq \delta\).
If so, we run \(\textsc{GreedyMappingQ}(j^e, s^f)\) and use its output.
Otherwise, we output zero approximate reachability intervals.

\begin{figure}[t]
\centering
\hspace{-0.5in}
\includegraphics[height=2in,trim=1in 0 0 0]{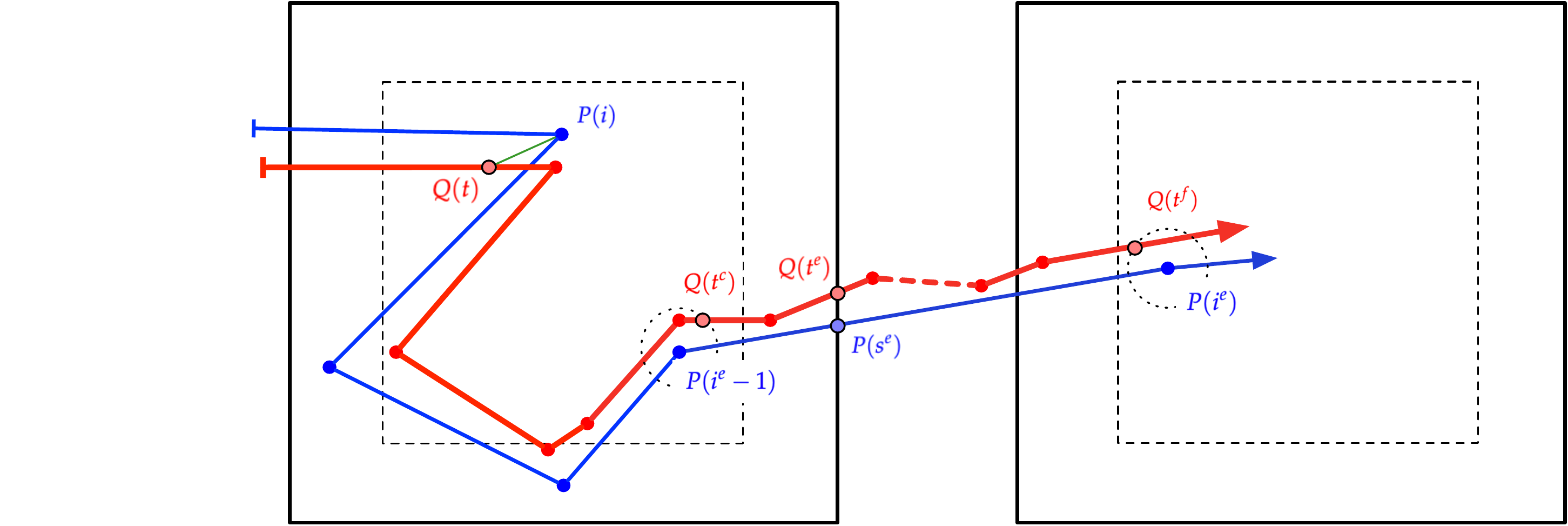}
\caption{\(\textsc{GreedyMappingP}(i, t)\): The case where \(P[i^e - 1, i^e]\) is good}
\label{fig:greedy_good}
\end{figure}

From here on, we assume neither curve leaves box \(B\) through a good edge.
Suppose there is a correspondence \((\reparamP, \reparamQ)\) as described in
Lemma~\ref{lem:greedy_guarantee}.
Further suppose the reparameterized walks along \(P\) and \(Q\) leave box \(B\) along \(P\) before
\(Q\).
In this case, we can show that \(P(s^e)\) is matched with a point on a bad edge of \(Q\).
Accordingly, we iterate over the bad edges of \(Q\) that appear before \(Q\) leaves box \(B\),
computing sufficiently large approximate reachability intervals along the top and right sides of
free space diagram cells for \(P[i^e - 1, i^e]\) and those bad edges of \(Q\).
Both of the edges for each of these cells are bad, so the intervals we compute are between bad edges
and dangerous vertices.

Specifically, let \(t \leq t_1 < t_2 < \dots < t_{\ell} \leq t^e\) be the list of first positions
along their respective edges of \(Q\) such that \(\dist(P(s^e), Q(t_{k})) \leq \delta\) for each \(k
\in \Seq{1, \dots, \ell}\).
See Figure~\ref{fig:greedy_bad}, left.
Observe that each edge containing a point \(t_k\) must be bad, because \(Q\) does not leave \(B\)
along a good edge, and no good edge with two endpoints in \(B\) lies within distance \(\delta\) of
\(P(s^e)\).
For each \(k \in \Seq{1, \dots, \ell}\), we do the following:
Let \(j_k \in \Seq{1, \dots, n}\) such that \(j_k - 1 \leq t_{k} \leq j_k\).
Let \(t^a_k\) be the minimum value in \([t_k, j_k]\) such that \(\dist(p_{i^e},
Q(t^a_k)) \leq \delta\) and let \(t^b_k\) be the maximum value in \([t_k, j_k]\) such that
\(\dist(p_{i^e}, Q(t^b_k)) \leq \delta\).
If \(t^a_k\) and \(t^b_k\) are well-defined, then we designate the interval \(\Set{i^e} \times
[t^a_k, t^b_k]\) as approximately reachable.
(If we have already designated a subset of \(\Set{i^e} \times [j_k - 1, j_k]\) as approximately
reachable earlier in the decision procedure, then we extend the approximately reachable area by
taking the union with the old interval.
Every interval of \(\Set{i^e} \times [j_k - 1, j_k]\) we compute will end at \((i^e, t^b_k)\), so
the union is also an interval.)
Similarly, let \(s^a_k\) be the minimum value in \([s^e, i^e]\) such that \(\dist(P(s^a_k),
q_{j_k}) \leq \delta\) and let \(s^b_k\) be the maximum value in \([s^e, i^e]\) such that
\(\dist(P(s^b_k), q_{j_k}) \leq \delta\).
If \(s^a_k\) and \(s^b_k\) are well-defined, then we designate the interval \([s^a_k, s^b_k] \times
\Set{j_k}\) as approximately reachable.
See Figure~\ref{fig:greedy_bad}, right.

It is also possible that a good correspondence has the walk along \(Q\) leave box \(B\) first.
So, \emph{in addition} to the above set of approximate reachability intervals, we also create some
based on points of \(P\) between \(p_i\) and \(P(s^e)\) that pass close to \(Q(t^e)\).
Let \(s \leq s_1 < s_2 < \dots < s_{\ell} \leq s^e\) be the exhaustive list of first
positions along their respective edges of \(P\) such that \(\dist(P(s_k), Q(t^e)) \leq \delta\)
for each \(k \in \Seq{1, \dots, \ell}\).
For each \(k \in \Seq{1, \dots, \ell}\), we do the following:
Let \(i_k \in \Seq{1, \dots, n}\) such that \(i_k - 1 \leq s_{k} \leq i_k\).
Let \(s^a_k\) be the minimum value in \([s_k, i_k]\) such that \(\dist(P(s^a_k), q_{j^e}) \leq
\delta\), and let \(s^b_k\) be the maximum value in \([s_k, i_k]\) such that \(\dist(P(s^b_k),
q_{j^e}) \leq \delta\).
If \(s^a_k\) and \(s^b_k\) are well-defined, then we designate the interval \([s^a_k, s^b_k] \times
\Set{j^e}\) as approximately reachable.
Similarly, let \(t^a_k\) be the minimum value in \([t^e, j^e]\) such that \(\dist(p_{j_k},
Q(t^a_k)) \leq \delta\),
and let \(t^b_k\) be the maximum value in \([t^e, j^e]\) such that
\(\dist(p_{j_k}, Q(t^b_k)) \leq \delta\).
If \(t^a_k\) and \(t^b_k\) are well-defined, then we designate the interval \(\Set{i_k} \times
[t^a_k, t^b_k]\) as approximately reachable.

We have concluded our description of \(\textsc{GreedyMappingP}(i, t)\) and are ready to prove
Lemmas~\ref{lem:greedy_output} and~\ref{lem:greedy_guarantee}.

\begin{figure}[t]
  \centering
  \hspace{-1in}
  \includegraphics[height=2.5in,trim=1in 0 0 0]{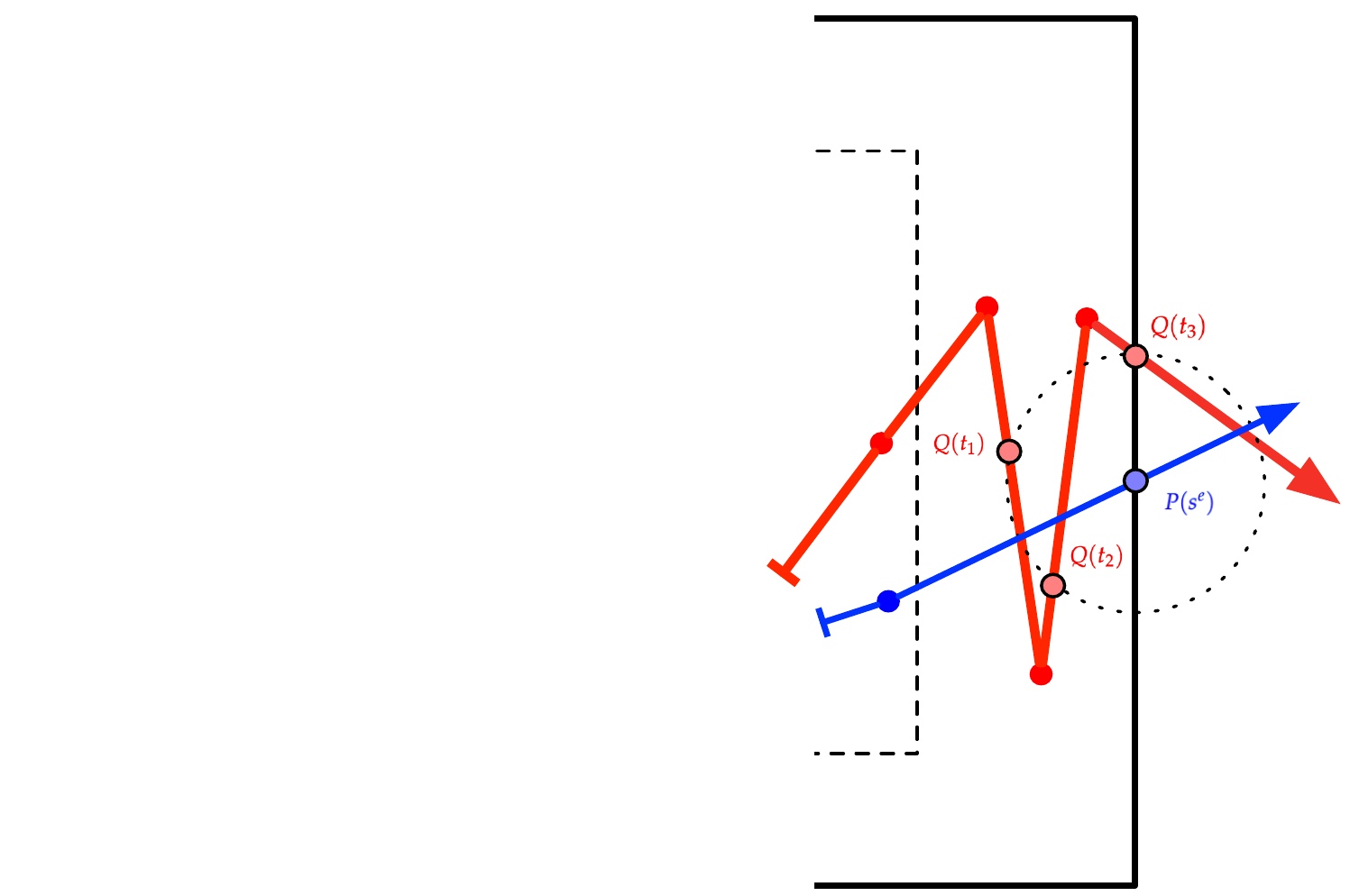}\qquad\qquad\qquad
  \includegraphics[height=2in]{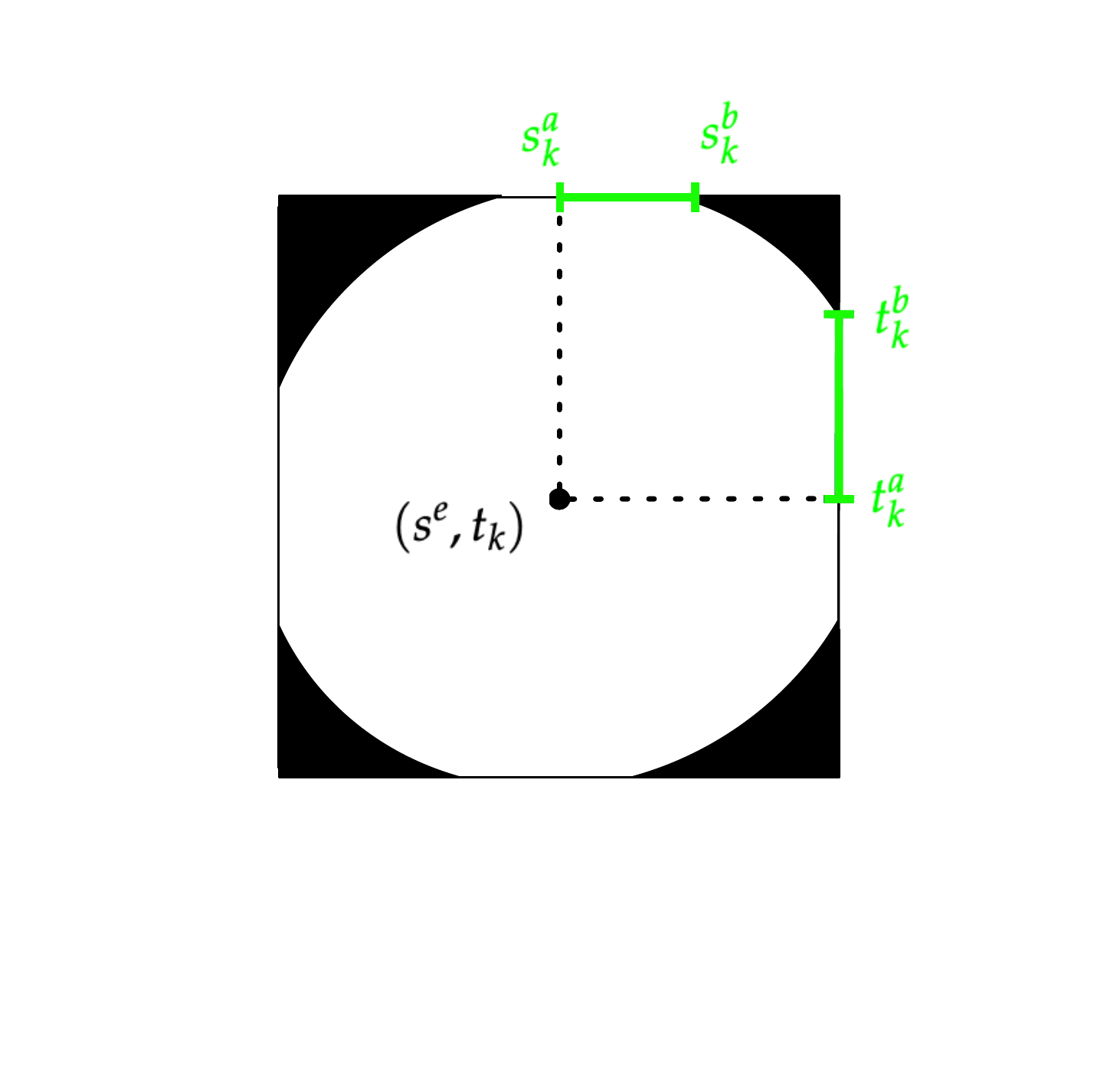}
  \caption{Left: Definition of \(\Seq{t_1, \dots, t_{\ell}}\).
  Right: Designating approximate reachability intervals near a pair \((s^e, t_k)\).
  The clipped ellipse coincides with the free points inside cell~\(C_{i^e, j_k}\).}
  \label{fig:greedy_bad}
\end{figure}

\begin{proof}[of Lemma~\ref{lem:greedy_output}]
  We use the same notation as given in the description of\linebreak
  \(\textsc{GreedyMappingP}(i, t)\).
  We first argue that we only output reachability intervals between bad edges and dangerous
  vertices.
  If we only output the trivial interval \(\Set{(m, n)}\) then the statement is trivially true.
  Otherwise, suppose we create an interval while working with \(P(s^e)\) and some nearby point
  \(Q(t_k)\).  We are not performing a recursive call to \(\textsc{GreedyMappingP}\) in this case,
  so \(P[i^e - 1, i^e]\) is bad, and \(p_{i^e}\) is dangerous.
  Similarly, we are not performing a recursive call to \(\textsc{GreedyMappingQ}\), so \(Q[j^k - 1,
  j^k]\) is not a good edge with endpoint \(q_{j^k}\) outside of box \(B\).
  Point \(Q(t_k)\) is within distance \(\delta\) of the boundary of \(B\), so \(Q[j_k - 1, j_k]\)
  cannot be a good edge with both endpoints in \(B\), either.
  We conclude \(Q[j_k - 1, j_k]\) is bad as well, and \(q_{j_k}\) is dangerous.
  A similar argument holds if we create an interval while working with \(Q(t^e)\) and some nearby
  point of \(P\).

  We now argue that for any pair of points \((s', t')\) on one of the approximate reachability
  intervals output by the procedure, there exists a correspondence of cost \(O(\alpha)\cdot \delta\)
  between \(P[i, s']\) and \(Q[t, t']\).
  First, suppose \(\textsc{GreedyMappingP}(i, t)\) creates one or more approximate reachability
  intervals without performing a recursive call.
  Suppose \(s^e = m\) or \(t^e = n\), implying \((s', t') = (m, n)\).
  All points of \(P[i, m]\) and \(Q[t, n]\) lie in or within distance \(\delta\) of \(B\), so they
  are all distance at most \(\sqrt{d} (\alpha+1) \cdot \delta\) from each other and any Fr\'echet
  correspondence between \(P[i, s']\) and \(Q[t, t']\) has cost \(O(\alpha) \cdot \delta\).

  Now suppose otherwise, but \((s', t')\) lies on an interval created while working with \(P(s^e)\)
  and some nearby point \(Q(t_k)\).
  All points of \(P[i, s^e]\) and \(Q[t, t_k]\) lie in \(B\), so they are all distance at most
  \(\sqrt{d} \alpha \cdot \delta\) from each other and any Fr\'echet correspondence between \(P[i,
  s^e]\) and \(Q[t, t_k]\) has cost \(O(\alpha) \cdot \delta\).
  The set of pairs \((x, y) \in P[i^e - 1, i^e] \times Q[j_k - 1, j_k]\) such that \(\dist(P(x),
  Q(y)) \leq \delta\) includes \((s^e, t_k)\) and \((s', t')\), and the set is
  convex~\cite{ag-cfdbt-95}, so we can extend our correspondence to include another between \(P[s^e,
  s']\) and \(Q[t_k, t']\) of cost at most \(\delta\).
  A similar argument covers the case where \((s', t')\) lies on an interval created while working
  with \(Q(t^e)\) and a nearby point of \(P\).

  Finally, suppose \(\textsc{GreedyMappingP}(i, t)\) recursively calls
  \(\textsc{GreedyMappingP}(i^e, t^f)\).
  Every point of \(P[i, i^e - 1]\) and \(Q[t, t^c]\) lies in or within distance \(\delta\) of \(B\),
  so every correspondence between \(P[i, i^e - 1]\) and \(Q[t, t^c]\) has cost at most
  \(O(\alpha)\cdot \delta\).
  Also, we have \(\frechet(P[i^e-1, i^e], Q[t^c, t^f]) \leq \delta\).
  We can combine these correspondences with the one inductively guaranteed by the call to
  \(\textsc{GreedyMappingP}(i^e, t^f)\) to get our desired correspondence between \(P[i, s']\) and
  \(Q[t, t']\).
  Again, a similar argument covers the case where we do a recursive call
  \(\textsc{GreedyMappingQ}(j^e, s^f)\).

  The proof for \(\textsc{GreedyMappingQ}(j, s)\) is the same, but with the roles of \(P\) and \(Q\)
  exchanged.
\end{proof}

\begin{proof}[of Lemma~\ref{lem:greedy_guarantee}]
  We use the same notation as given in the description of\linebreak
  \(\textsc{GreedyMappingP}(i, t)\).
  By assumption and the fact that \(p_i\) is good, every point of \(Q[t, t^*]\) lies within
  \(B\).
  Let \(r^{se} = \reparamP^{-1}(s^e)\) and \(r^{te} = \reparamQ^{-1}(t^e)\).

  Suppose \(\textsc{GreedyMappingP}(i, t)\) does not do a recursive call.
  If we output the trivial interval \(\Set{(m, n)}\), then the lemma is trivially true.
  Suppose we do not output the trivial interval and \(r^{se} \leq r^{te}\).
  Point \(Q(\reparamQ(r^{se}))\) lies on an edge \(Q[j_k - 1, j_k]\) with one of the points
  \(Q(t_k)\) where \(\dist(P(s^e), Q(t_k)) \leq \delta\).
  By definition of \(t_k\), we have \(t_k \leq \reparamQ(r^{se})\).
  Recall, the free space is convex within each individiaul cell of the free space
  diagram~\cite{ag-cfdbt-95}.
  Therefore, the set of \(s^e \leq s' \leq i^e\) such that \(\frechet(P[s^e, s'],
  Q[\reparamQ(r^{se}), j_k]) \leq \delta\) is precisely the approximate reachability interval
  \([s^a_k, s^b_k] \times \Set{j_k}\) we computed.
  Similarly, the set of \(\reparamQ(r^{se}) \leq t' \leq j_k\) such that \(\frechet(P[s^e, i^e],
  Q[\reparamQ(r^{se}), t']) \leq \delta\) is actually a \emph{suffix} of the approximate
  reachability interval \(\Set{i^e} \times [t^a_k, t^b_k]\) we computed.
  A similar argument applies if \(r^{te} < r^{se}\).

  Finally, suppose \(\textsc{GreedyMappingP}(i, t)\) recursively calls
  \(\textsc{GreedyMappingP}(i^e, t^f)\).
  Let \(t^{f*}\) be matched with \(i^e\) and \(t^{c*}\) be matched with \(i^e - 1\) by \((\reparamP,
  \reparamQ)\).
  Because \(p_{i^e - 1}\) and \(p_{i^e}\) are both good, \(\dist(p_{i^e - 1}, Q(t^{c*})) \leq
  \delta\), and \(\dist(p_{i^e}, Q(t^f)) \leq \delta\), points \(Q(t^{c*})\) and \(Q(t^f)\) lie
  within the same boxes as \(p_{i^e - 1}\) and \(p_{i^e}\), respectively.
  These boxes are distinct, so we may conclude \(t^{c*} \leq t^f\).
  Further, we chose \(t^c \geq t^{c*}\) and \(t^f \leq t^{f*}\), and we may infer \(Q(t^c)\) and
  \(Q(t^{f*})\) also lie in the same boxes as \(p_{i^e - 1}\) and \(p_{i^e}\), respectively.
  We conclude \(t^{c*} \leq t^c < t^f \leq t^{f*}\).

  Consider the following correspondence between \(P[i^e - 1, i^e]\) and \(Q[t^c, t^f]\):
  Let \(s^c \geq i^e - 1\) and \(s^f \leq i^e\) be matched to \(t^c\) and \(t^f\), respectively, by
  \((\reparamP, \reparamQ)\).
  We match every point of \(P[i^e - 1, s^c]\) to \(Q(t^c)\), match \(P[s^c, s^f]\) to \(Q[t^c,
  t^f]\) exactly as done by \((\reparamP, \reparamQ)\), and match every point of \(P[s^f, i^e]\) to
  \(Q(t^f)\).
  See Figure~\ref{fig:greedy_remap}.
  We have \(\dist(p_{i^e - 1}, Q(t^c)) \leq \delta\) and \(\dist(P(s^c), Q(t^c)) \leq \delta\), so
  the entire line segment \(P[i^e - 1, s^c]\) lies within distance \(\delta\) of \(Q(t^c)\).
  Similarly, the line segment \(P[s^f, i^e]\) lies within distance \(\delta\) of \(Q(t^f)\).
  Our correspondence has cost at most \(\delta\).

  \begin{figure}[t]
    \centering
    \includegraphics[height=2in,trim=2in 0 1in 0]{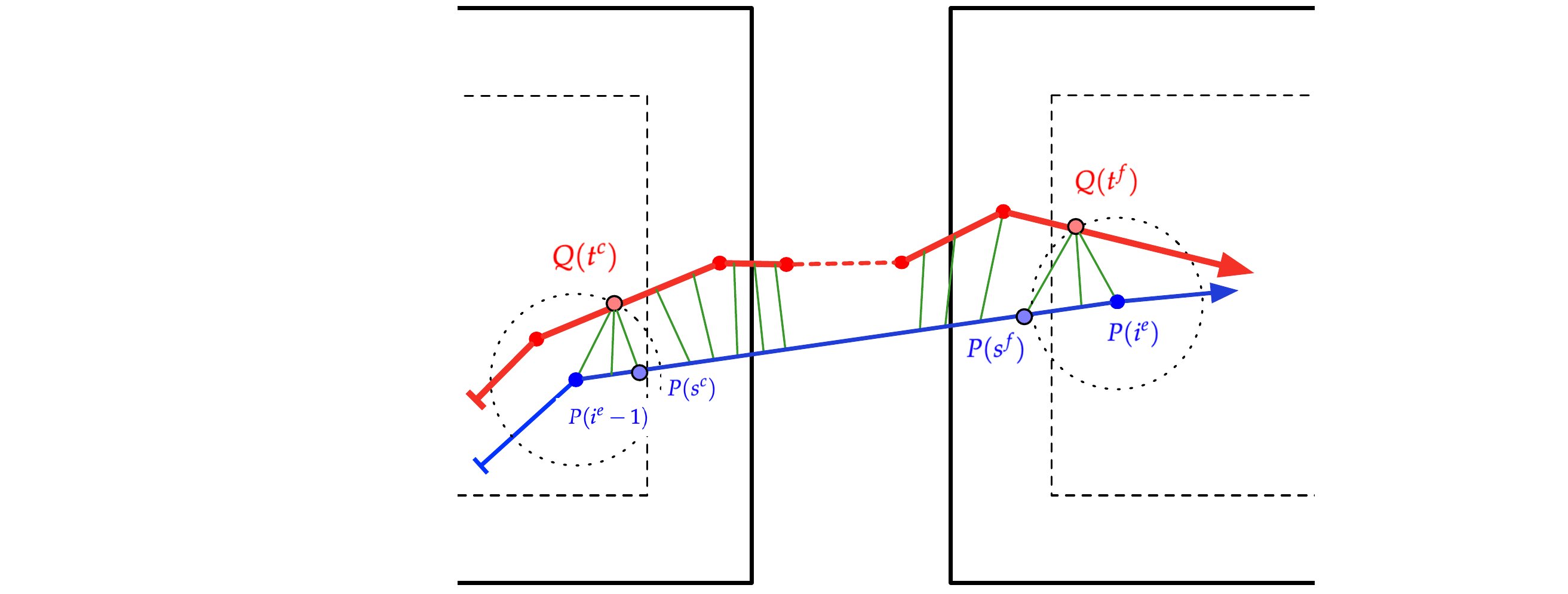}
    \caption{A correspondence of cost \(\delta\) between \(P[i^e - 1, i^e]\) and \(Q[t^c, t^f]\).
    A subset of matched points are represented by thin green line segments.}
    \label{fig:greedy_remap}
  \end{figure}

  Now, consider any point \(Q(t')\) with \(t^{c*} \leq t' \leq t^c\) and let \(s'\) be matched to
  \(t'\) by \((\reparamP, \reparamQ)\).
  We have \(\dist(P(s'), Q(t')) \leq \delta\).
  We argued that line segment \(P[i^e - 1, s^c]\) is within distance \(\delta\) of \(Q(t^c)\),
  implying \(\dist(P(s'), Q(t^c)) \leq \delta\).
  Finally, \(\dist(p_{i^e - 1}, Q(t^c)) \leq \delta\).
  By triangle inequality, \(\dist(p_{i^e - 1}, Q(t')) \leq 3\delta\), implying \(Q(t')\) lies in
  \(B\).
  As explained above, every point of \(Q[t, t^*]\) lies in \(B\).
  Also, every point of \(Q[t^*, t^{c*}]\) lies within distance \(\delta\) of a point in
  \(P[i, i^e - 1]\) and therefore lies in or within distance \(\delta\) of \(B\).
  And, we just showed every point of \(Q[t^{c*}, t^c]\) lies in \(B\).
  Our algorithm will succeed at all its distance checks and recursively call
  \(\textsc{GreedyMappingP}(i^e, t^f)\).
  Finally, a similar triangle inequality argument implies every point of \(Q[t^f, t^{*f}]\) is at
  most distance \(3\delta\) from \(p_{i^e}\).
  We are inductively guaranteed that \((\reparamP, \reparamQ)\) passes through an
  approximate reachability interval output during the recursive call.
  Similar arguments apply if \(\textsc{GreedyMappingP}(i, t)\) does a recursive call
  \(\textsc{GreedyMappingQ}(j^e, s^f)\).

  The proof for \(\textsc{GreedyMappingQ}(j, s)\) is the same as that given above, but with the
  roles of \(P\) and \(Q\) exchanged.
\end{proof}

\subsection{Remaining decision procedure details}
\label{sec:decision_procedure-decision_details}

We now fill in the remaining details of our approximate decision procedure.
Recall, we have computed a grid \(G\) with boxes of side length \(\alpha\cdot\delta\) such that
there are \(O(n / \alpha)\) bad vertices of \(P\) and \(Q\).
Also recall, \(p_1\), \(p_m\), \(q_1\), and \(q_n\) are designated as bad regardless of their
position in \(G\)'s boxes.
As described below, our decision procedure, iteratively in lexicographic order, checks each cell of
the free space diagram for which we may have computed an approximate reachablity interval on its
left or bottom side.
We then extend the known approximately reachable space from each non-empty interval in one of two
ways.
Depending on whether relevant edges are good or bad, we either perform a call to the appropriate
greedy mapping subroutine to seek out new intervals that are approximately reachable but potentially
far away in the free space diagram, or we directly compute approximate reachability intervals on the
right or top sides of the cell using the constant time method of Alt and Godau~\cite{ag-cfdbt-95}.

Specifically, we first check if \(\dist(p_1, q_1) \leq \delta\).
If not, our procedure reports failure.
Otherwise, let \(t^b\) and \(s^b\) be the maximum values in \([1, 2]\) such that \(\dist(p_1,
Q(t^b)) \leq \delta\) and \(\dist(P(s^b), q_1) \leq \delta\), respectively.
We designate intervals \(\Set{1} \times [1, t^b]\) and \([1, s^b] \times \Set{1}\) as
(approximately) reachable.
Now, for each \(i \in \Seq{2, \dots, m}\) such that \(p_{i - 1}\) is dangerous, for each \(j \in
\Seq{2, \dots, n}\) such that \(q_{j - 1}\) is dangerous, we do the following.

Suppose we have designated an interval \(\Set{i-1} \times [t^a, t^b]\) as approximately reachable
where \(j-1 \leq t^a \leq t^b \leq j\).
Suppose edge \(P[i - 1, i]\) is good.
Then, we run the procedure \(\textsc{GreedyMappingP}(i - 1, t^a)\).
If edge \(P[i - 1, i]\) is bad, we compute new approximate reachability intervals more directly as
follows.
First, let \(t^{a'}\) be the minimum value in \([t^a, j]\) such that \(\dist(p_i, Q(t^{a'})) \leq
\delta\), and let \(t^{b'}\) be the maximum value in \([t^a, j]\) such that \(\dist(p_i, Q(t^{b'}))
\leq \delta\).
We designate interval \(\Set{i} \times [t^{a'}, t^{b'}]\) as approximately reachable (again, we may
end up extending a previously computed approximately reachability interval on \(\Set{i} \times [j -
1, j]\)).
Similarly, let \(s^{a'}\) be the minimum value in \([i - 1, i]\) such that \(\dist(P(s^{a'}), q_j)
\leq \delta\), and let \(s^{b'}\) be the maximum value in \([i - 1, i]\) such that
\(\dist(P(s^{a'}), q_j) \leq \delta\).
We designate interval \([s^{a'}, s^{b'}] \times \Set{j}\) as approximately reachable.
We are done working with interval \(\Set{i-1} \times [t^a, t^b]\).

Now, suppose we have designated interval \([s^a, s^b] \times \Set{j - 1}\) as approximately
reachable where \(i - 1 \leq s^a \leq s^b \leq i\).
Suppose edge \(Q[j - 1, j]\) is good.
If so, we run the procedure \(\textsc{GreedyMappingQ}(j - 1, s^a)\).
If edge \(Q[j - 1, j]\) is bad, we compute new approximate reachability intervals more directly as
follows.
First, let \(t^{a'}\) be the minimum value in \([j - 1, j]\) such that \(\dist(p_i, Q(t^{a'})) \leq
\delta\), and let \(t^{b'}\) be the maximum value in \([j - 1, j]\) such that \(\dist(p_i,
Q(t^{b'})) \leq \delta\).
We designate interval \(\Set{i} \times [t^{a'}, t^{b'}]\) as approximately reachable.
Similarly, let \(s^{a'}\) be the minimum value in \([s^a, i]\) such that \(\dist(P(s^{a'}), q_j)
\leq \delta\), and let \(s^{b'}\) be the maximum value in \([s^b, i]\) such that \(\dist(P(s^{a'}),
q_j) \leq \delta\).
We designate interval \([s^{a'}, s^{b'}] \times \Set{j}\) as approximately reachable.
We are done working with interval \([s^a, s^b] \times \Set{j - 1}\).

Once we have completed the iterations, we do one final step.
We check if \((m, n)\) lies on an approximate reachability interval.
If so, we report there is a Fr\'echet correspondence between \(P\) and \(Q\) of cost
\(O(\alpha)\cdot \delta\).
Otherwise, we report failure.

The following lemmas establish the correctness and running time for our decision procedure.

\begin{lemma}
  \label{lem:decision_intervals}
  The approximate decision procedure creates approximate reachability intervals only between bad
  edges of \(P\) or \(Q\) and dangerous vertices of \(Q\) or \(P\), respectively.
\end{lemma}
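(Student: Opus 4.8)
The plan is to enumerate every place in the decision procedure where an approximate reachability interval is created and verify that each such interval sits on the side of a free-space cell $C_{i,j}$ whose two bounding edges are a bad edge of one chain and a dangerous vertex of the other. There are essentially three sources of created intervals: (i) the seed intervals $\Set{1}\times[1,t^b]$ and $[1,s^b]\times\Set{1}$; (ii) the intervals created in the main loop when a relevant edge $P[i-1,i]$ or $Q[j-1,j]$ is bad; and (iii) the intervals output by calls to $\textsc{GreedyMappingP}$ or $\textsc{GreedyMappingQ}$. I would handle these three cases in turn.

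For (iii) there is nothing left to prove: Lemma~\ref{lem:greedy_output} already asserts that every interval output by the greedy mapping subroutines lies between a bad edge of one chain and a dangerous vertex of the other, so those calls contribute only admissible intervals. For (i), recall that $p_1$, $q_1$, $p_m$, $q_n$ are designated bad by fiat; hence every edge incident to $p_1$ (in particular $P[1,2]$) is bad and $p_1$ itself is dangerous, and likewise for $q_1$. The interval $\Set{1}\times[1,t^b]$ lives on the left side of cell $C_{2,2}$, i.e.\ is indexed by vertex $p_1$ (dangerous, as just noted) and edge $Q[1,2]$ (bad, since $q_1$ is bad); symmetrically for $[1,s^b]\times\Set{1}$. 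So the seed intervals are admissible.

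The real content is case (ii). Here I would use the loop guards: the procedure only processes a pair $(i,j)$ when $p_{i-1}$ and $q_{j-1}$ are dangerous, and new intervals are created ``directly'' only in the branch where the relevant edge is \emph{bad}. Concretely, when we process an interval $\Set{i-1}\times[t^a,t^b]$ with $P[i-1,i]$ bad, we create $\Set{i}\times[t^{a'},t^{b'}]$ on the right side of $C_{i,j}$ and $[s^{a'},s^{b'}]\times\Set{j}$ on the top side of $C_{i,j}$. The right side is indexed by vertex $p_i$ and edge $Q[j-1,j]$; $p_i$ is dangerous because its incident edge $P[i-1,i]$ is bad, and $Q[j-1,j]$ is bad because $q_{j-1}$ is bad (being dangerous is not enough here, so I should note that $q_{j-1}$ dangerous is used to \emph{enter} the loop, while the badness of $Q[j-1,j]$ needs a separate justification — it follows because the interval $\Set{i-1}\times[t^a,t^b]$ was itself created only on a bad edge $Q[j-1,j]$ by induction, or is a seed interval, again on a bad edge). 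The top side $[s^{a'},s^{b'}]\times\Set{j}$ is indexed by edge $P[i-1,i]$ (bad by the case hypothesis) and vertex $q_j$; $q_j$ is dangerous because $Q[j-1,j]$ is bad. The symmetric sub-case, processing $[s^a,s^b]\times\Set{j-1}$ with $Q[j-1,j]$ bad, is handled the same way with the roles of $P$ and $Q$ swapped.

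The main obstacle, and the one place the argument needs care, is making the induction in case (ii) airtight: one must argue that whenever the procedure has on hand an interval $\Set{i-1}\times[t^a,t^b]$ (resp.\ $[s^a,s^b]\times\Set{j-1}$), the edge $Q[j-1,j]$ (resp.\ $P[i-1,i]$) indexing that interval is already known to be bad. This is true because every interval the procedure ever records was created either as a seed interval (which lies on an edge incident to $q_1$ or $p_1$, hence bad), or in the ``bad edge'' branch of case (ii) (so it lies on a bad edge by construction), or by a greedy mapping call (so it lies on a bad edge by Lemma~\ref{lem:greedy_output}). Thus a simple structural induction on the order in which intervals are created closes the loop, and the lemma follows.
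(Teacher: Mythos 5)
Your proposal is correct and follows essentially the same route as the paper: a structural induction on the order in which intervals are created, with the seed intervals as the base case, Lemma~\ref{lem:greedy_output} handling the greedy-call branch, and the inductive hypothesis (the interval being processed already lies on a bad edge) supplying the badness of $Q[j-1,j]$ (resp.\ $P[i-1,i]$) in the direct-creation branch. Your writeup is somewhat more explicit than the paper's about where that inductive hypothesis is needed, but the argument is the same.
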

\begin{proof}
  Vertices \(p_1\) and \(q_1\) are bad, so the intervals we compute before beginning the for loops
  are between bad edges and dangerous vertices.
  Now, consider working with some approximate reachability interval \(\Set{i - 1} \times [t^a,
  t^b]\) with \(j - 1 \leq t^a \leq t^b \leq j\).
  Inductively, we may assume \(Q[j - 1, j]\) is bad, implying \(q_j\) is dangerous.
  If \(P[i - 1, i]\) is good, then Lemma~\ref{lem:greedy_output} guarantees we only create
  approximate reachability intervals between bad edges and dangerous vertices.
  Otherwise, \(p_i\) is dangerous, and both approximate reachability intervals we directly create
  are for bad edge/dangerous vertex pairs.
  A similar argument applies when working with some interval \([s^a, s^b] \times \Set{j - 1}\).
\end{proof}

\begin{lemma}
  \label{lem:decision_output}
  The approximate decision procedure is correct if it reports \(\frechet(P, Q) \leq O(\alpha) \cdot
  \delta\).
\end{lemma}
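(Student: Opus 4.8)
The plan is to show that whenever the procedure reports success, the point $(m,n)$ lies on some approximate reachability interval it has created, and then to exhibit a genuine Fr\'echet correspondence between $P$ and $Q$ of cost $O(\alpha)\cdot\delta$ witnessing this. The key point is that every approximate reachability interval the procedure ever records carries with it, by construction, a correspondence of cost $O(\alpha)\cdot\delta$ between the relevant prefixes of $P$ and $Q$; so it suffices to prove this invariant by induction on the lexicographic order in which intervals are created. I would state the invariant as: for every approximate reachability interval $I$ recorded by the procedure, and every $(s',t')\in I$, there is a Fr\'echet correspondence between $P[1,s']$ and $Q[1,t']$ of cost $O(\alpha)\cdot\delta$. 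Applying this to the interval containing $(m,n)$ gives exactly the claimed correspondence between $P=P[1,m]$ and $Q=Q[1,n]$.

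First I would handle the base case: if $\dist(p_1,q_1)\le\delta$ the procedure records $\{1\}\times[1,t^b]$ and $[1,s^b]\times\{1\}$, and for any point on either interval the obvious correspondence (hold one curve's first vertex fixed while the free space within cell $C_{2,2}$ — which is convex by Alt and Godau — guides the other) has cost at most $\delta$. For the inductive step I would split into the cases mirroring the procedure's structure. When we process an interval $\{i-1\}\times[t^a,t^b]$ and edge $P[i-1,i]$ is good, the new intervals come from $\textsc{GreedyMappingP}(i-1,t^a)$; by Lemma~\ref{lem:greedy_output}, every point $(s',t')$ on an output interval admits a correspondence of cost $O(\alpha)\cdot\delta$ between $P[i-1,s']$ and $Q[t^a,t']$ — and since $i-1$ is good, $\dist(p_{i-1},Q(t^a))\le\delta$ is exactly the hypothesis that procedure requires. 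Concatenating this with the inductively-guaranteed correspondence between $P[1,i-1]$ and $Q[1,t^a]$ (which exists because $(i-1,t^a)$ lies on the interval $\{i-1\}\times[t^a,t^b]$) yields a correspondence between $P[1,s']$ and $Q[1,t']$ of cost $O(\alpha)\cdot\delta$. When instead $P[i-1,i]$ is bad, the new intervals $\{i\}\times[t^{a'},t^{b'}]$ and $[s^{a'},s^{b'}]\times\{j\}$ are computed directly: convexity of the free space within cell $C_{i,j}$ shows each point on the new interval is joined to $(i-1,t^a)$ (respectively to a point on $[s^a,s^b]\times\{j-1\}$) by a correspondence of cost $\le\delta$ within that single cell, which we again concatenate with the inductive correspondence for the prefix. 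The symmetric cases for an interval $[s^a,s^b]\times\{j-1\}$ and edge $Q[j-1,j]$ are identical with the roles of $P$ and $Q$ exchanged, and when intervals get extended by union with a previously computed interval on the same free-space-diagram edge, the invariant for the union follows since it holds for each piece.

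The main obstacle I anticipate is bookkeeping rather than conceptual: making sure that when the procedure invokes $\textsc{GreedyMappingP}(i-1,t^a)$ the precondition ``$p_{i-1}$ good and $\dist(p_{i-1},Q(t^a))\le\delta$'' of Lemma~\ref{lem:greedy_output} is actually met — here $p_{i-1}$ is good precisely because $P[i-1,i]$ being good forces both its endpoints to be good, and $\dist(p_{i-1},Q(t^a))\le\delta$ holds because $(i-1,t^a)$ lies on a (sub-)interval of a free space interval, hence is a free point. A second subtlety is the constant in $O(\alpha)$: each concatenation glues together at most one ``greedy'' correspondence of cost $O(\alpha)\cdot\delta$, a bounded number of single-cell correspondences of cost $\delta$, and the inductive prefix correspondence, and since the recursion depth inside a single $\textsc{GreedyMappingP}$ call is not multiplied across the $O(n^2/\alpha^2)$ iterations — each iteration's cost bound is self-contained in Lemma~\ref{lem:greedy_output} — the overall cost stays $O(\alpha)\cdot\delta$ with a fixed constant. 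I would close by noting that taking the maximum over the finitely many cells touched only affects the hidden constant, so the reported bound $\frechet(P,Q)\le O(\alpha)\cdot\delta$ is justified.
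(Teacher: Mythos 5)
Your proposal is correct and follows essentially the same argument as the paper: induction over the intervals in creation order with the invariant that every point $(s',t')$ on a recorded interval admits a prefix correspondence of cost $O(\alpha)\cdot\delta$, invoking Lemma~\ref{lem:greedy_output} for the good-edge case and single-cell convexity (Alt--Godau) for the bad-edge case, then specializing to $(m,n)$. Your worry about constants accumulating is moot for the reason you implicitly rely on — the cost of a concatenated correspondence is the \emph{maximum} of the costs of its pieces, not their sum — so the paper does not even discuss it.
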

\begin{proof}
  Let \((s', t')\) be any member of an approximate reachability interval created by the procedure.
  We will show there exists a Fr\'echet correspondence between \(P[1, s']\) and \(Q[1, t']\) of cost
  \(O(\alpha) \cdot \delta\).
  Setting \((s', t') = (m, n)\) then proves the lemma.
  First, if \((s', t')\) lies on either interval created before the for loops begin, there is a
  trivial correspondence between \(P[1, s']\) and \(Q[1, t']\) of cost at most \(\delta\) that only
  uses one point of either \(P\) or \(Q\).
  Now, consider working with some approximate reachability interval \(\Set{i - 1} \times [t^a,
  t^b]\) with \(j - 1 \leq t^a \leq t^b \leq j\).
  Inductively, we may assume there is a correspondence of cost \(O(\alpha)\cdot \delta\) between
  \(P[1, i - 1]\) and \(Q[1, t^a]\).

  Suppose \(P[i - 1, i]\) is good, and we call \(\textsc{GreedyMappingP}(i - 1, t^a)\).
  By Lemma~\ref{lem:greedy_output}, we can extend our inductively guaranteed correspondence to one
  of cost \(O(\alpha)\cdot \delta\) ending at any point \((s', t')\) in any approximate reachability
  interval output by \(\textsc{GreedyMappingP}(i - 1, t^a)\).
  Now, suppose instead that \(P[i - 1, i]\) is bad.
  As in the proof of Lemma~\ref{lem:greedy_output} or the original exact algorithm of Alt and
  Godau~\cite{ag-cfdbt-95}, there is a Fr\'echet correspondence of cost at most \(\delta\) between
  \(P[i - 1, s']\) and \(Q[t^a, t']\) for any \((s', t')\) on the approximate reachability intervals
  we directly compute.
  Again, we can extend the inductively guaranteed correspondence to end at any such \((s', t')\).
  A similar argument applies when working with some interval \([s^a, s^b], \times \Set{j - 1}\).
\end{proof}

\begin{lemma}
  \label{lem:decision_guarantee}
  Suppose there exists a Fr\'echet correspondence \((\reparamP, \reparamQ)\) between \(P\) and \(Q\)
  of cost at most \(\delta\).
  The approximate decision procedure will report \(\frechet(P, Q) \leq O(\alpha) \cdot \delta\).
\end{lemma}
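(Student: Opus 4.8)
The plan is to trace the behavior of the assumed optimal correspondence $(\reparamP, \reparamQ)$ through the free space diagram and show that at every dangerous vertex / bad edge pair it ``touches,'' our procedure has recorded an approximate reachability interval it passes through; applying this repeatedly from $(1,1)$ eventually certifies $(m,n)$ as approximately reachable. Concretely, I would define a sequence of ``checkpoints'': pairs $(s,t)$ that $(\reparamP,\reparamQ)$ matches, where either $P(s)$ or $Q(t)$ is a dangerous vertex sitting at the start of the relevant piece of the trace. The base checkpoint is $(1,1)$, which is valid since $\frechet(P,Q)\le\delta$ forces $\dist(p_1,q_1)\le\delta$, and our procedure explicitly records $\Set{1}\times[1,t^b]$ and $[1,s^b]\times\Set{1}$. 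The inductive claim is: whenever $(\reparamP,\reparamQ)$ passes through an approximate reachability interval on a cell side $\Set{i-1}\times[j-1,j]$ (say) that our procedure has recorded, then it also passes through one we record on a ``later'' cell side, with the later side either the top/right of the same cell (bad-edge case) or reached via a greedy mapping call (good-edge case).

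The key case split mirrors the procedure's structure. First I would handle the bad-edge case: if $P[i-1,i]$ is bad, then $p_i$ is dangerous, and since the free space is convex within cell $C_{i,j}$ (citing Alt--Godau), the portion of $(\reparamP,\reparamQ)$ entering the cell through our recorded interval on its bottom/left must exit through the top or right, and our directly-computed intervals $\Set{i}\times[t^{a'},t^{b'}]$ and $[s^{a'},s^{b'}]\times\Set{j}$ are exactly the (maximal) reachable sets there, so the correspondence passes through one of them. Second, the good-edge case: if $P[i-1,i]$ is good, I invoke Lemma~\ref{lem:greedy_guarantee} for the call $\textsc{GreedyMappingP}(i-1,t^a)$. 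To do so I must verify its hypotheses --- that $(\reparamP,\reparamQ)$ matches $i-1$ with some $t^*\ge t^a$ such that every point of $Q[t^a,t^*]$ is within $3\delta$ of $p_{i-1}$. This is where a short triangle-inequality argument is needed: $p_{i-1}$ is good, the recorded interval lies on $\Set{i-1}\times[j-1,j]$, the correspondence enters it at some point within $\delta$ of $p_{i-1}$, and convexity of the free space on that edge plus the box geometry forces $Q$ to stay near $p_{i-1}$ until it genuinely departs the box $B$ containing $p_{i-1}$. Lemma~\ref{lem:greedy_guarantee} then says $(\reparamP,\reparamQ)$ passes through an interval output by that greedy call, and by Lemma~\ref{lem:decision_intervals} that interval is between a bad edge and a dangerous vertex, so it is exactly the kind of checkpoint the induction needs.

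The induction terminates because each step advances $(\reparamP,\reparamQ)$ strictly forward along both $P$ and $Q$ (a later cell side, or --- through greedy mapping --- a strictly later box), and there are finitely many dangerous vertices and bad edges; when a curve runs out (the $s^e=m$ or $t^e=n$ situations, or reaching the last cell), the trivial interval $\Set{(m,n)}$ is produced. Hence $(m,n)$ is eventually designated approximately reachable, the final check succeeds, and by Lemma~\ref{lem:decision_output} the reported value $O(\alpha)\cdot\delta$ is a correct upper bound on $\frechet(P,Q)$. The main obstacle I anticipate is the good-edge step: carefully arguing that the hypotheses of Lemma~\ref{lem:greedy_guarantee} hold at each recursive checkpoint --- in particular that the $3\delta$ proximity condition is maintained, which requires combining the $\delta$-closeness from the recorded reachability interval, the box side length, and the convexity of free space within a cell, and handling the case where the greedy call itself recurses before producing output (here one chains Lemma~\ref{lem:greedy_guarantee} with itself, exactly as its own proof does). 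A secondary subtlety is bookkeeping the ``which curve leaves the box first'' dichotomy so that the checkpoint always lands on a side we actually iterate over in the $(i,j)$ double loop; this is handled because every interval we record lives on a cell side $\Set{i}\times[j-1,j]$ or $[i-1,i]\times\Set{j}$ with $p_{i}$ or $q_{j}$ (or the corresponding predecessor) dangerous, matching the loop's guard conditions.
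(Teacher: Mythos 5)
Your proposal is correct and follows essentially the same route as the paper's proof: an induction starting from the intervals containing $(1,1)$, a case split on whether the outgoing edge is good or bad, a direct convexity-of-free-space argument producing the top/right intervals in the bad case, and an invocation of Lemma~\ref{lem:greedy_guarantee} (after verifying its proximity hypothesis) in the good case, with strict forward progress guaranteeing termination at $(m,n)$. The paper's write-up is merely terser on the points you flag as subtleties (the $3\delta$ hypothesis and the bookkeeping of which side the next interval lands on), but the underlying argument is the same.
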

\begin{proof}
  Suppose \((\reparamP, \reparamQ)\) matches a pair \((i - 1, t^*)\) on some approximate
  reachability interval \(\Set{i - 1} \times [t^a, t^b]\).
  Suppose \(P[i - 1, i]\) is good.
  Every point of \(Q[t^a, t^*]\) lies within distance \(\delta\) of \(p_{i - 1}\).
  Lemma~\ref{lem:greedy_guarantee} guarantees \(\textsc{GreedyMappingP}(i - 1, t^a)\) will output at
  least one approximate reachability interval which includes a matched pair of \((\reparamP,
  \reparamQ)\).
  We can easily verify that the interval must involve a later vertex of \(P\) than \(p_{i - 1}\).

  Now, suppose instead that \(P[i - 1, i]\) is bad.
  The set of \(i - 1 \leq s' \leq i\) such that \(\frechet(P[i - 1, s'], Q[t^*, j]) \leq \delta\) is
  precisely the approximate reachability interval \([s^{a'}, s^{b'}] \times \Set{j}\) we computed.
  Similarly, the set of \(t^* \leq t' \leq j\) such that \(\frechet(P[i - 1, i], Q[t^*, t']) \leq
  \delta\) is actually a suffix of the approximate reachability interval \(\Set{i} \times
  [t^{a'}, t^{b'}]\) we computed.

  Either way, we have \((\reparamP, \reparamQ)\) using an interval for a later vertex of \(P\) or
  \(Q\).
  If the interval contains \((m, n)\), the decision procedure will report there exists a cheap
  correspondence.
  Otherwise, we may assume it will report one inductively.
  Similar arguments apply if \((\reparamP, \reparamQ)\) includes a point on some approximate
  reachability interval \([s^a, s^b] \times \Set{j - 1}\).

  Finally, we observe that \((\reparamP, \reparamQ)\) does include a point on at least one
  approximate reachability interval, because our procedure begins by computing two intervals that
  include \((1, 1)\).
\end{proof}

\begin{lemma}
  \label{lem:greedy_time}
  Procedures \(\textsc{GreedyMappingP}(i, t)\) and \(\textsc{GreedyMappingQ}(j, s)\) can be
  implemented to run in \(O(n)\) time.
\end{lemma}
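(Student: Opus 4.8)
The plan is to walk through each step of \(\textsc{GreedyMappingP}(i, t)\) as described and argue that, with the right preprocessing, every step is either \(O(1)\) or can be amortized so the total over all recursive calls is \(O(n)\). The key structural observation is that the recursion is essentially a single monotone sweep: each recursive call \(\textsc{GreedyMappingP}(i^e, t^f)\) or \(\textsc{GreedyMappingQ}(j^e, s^f)\) strictly advances the ``current position'' along \(P\) and \(Q\) (we move past the good edge \(P[i^e - 1, i^e]\) or \(Q[j^e - 1, j^e]\) and to a strictly later point on the other curve), and the non-recursive base cases each do work proportional to the number of edges scanned plus the number of intervals output. So the recursion tree is really a path, and I would charge all work to the edges of \(P\) and \(Q\) that are scanned, each of which is scanned \(O(1)\) times across the whole execution.

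The concrete steps I would carry out, in order: (1) Finding \(s^e\) and \(t^e\) (where \(P\) and \(Q\) first leave box \(B\)): scan forward from \(p_i\) along \(P\) edge by edge, testing for each edge whether it crosses \(\bd B\), which is \(O(d)\) per edge; similarly for \(Q\). Charge this to the scanned edges; since each subsequent recursive call starts strictly past the previously scanned portion, each edge is charged \(O(1)\) times total. (2) The distance checks ``does every point of \(Q[t, t^c]\) (or \(Q[t,t^e]\), etc.) lie in or within \(\delta\) of \(B\)'': since \(B\) is convex and the segments of \(Q\) are convex, it suffices to check the vertices of \(Q\) in that range plus a constant amount of work per edge, again charged to scanned edges. (3) Computing \(t^f, t^c\) (and the symmetric \(s^f, s^c\)): \(t^f\) is the first point after \(t\) within distance \(\delta\) of \(p_{i^e}\), found by scanning forward along \(Q\) and solving a constant-degree equation (intersection of a segment with a ball) per edge; \(t^c\) is the last point before \(t^f\) within \(\delta\) of \(p_{i^e - 1}\), found by a scan over the same range. (4) The single free-space check \(\frechet(P[i^e - 1, i^e], Q[t^c, t^f]) \le \delta\): this is a Fréchet computation between one edge and a subchain of \(Q\); by the Alt--Godau cell analysis this is \(O(1)\) per cell, so \(O(|Q[t^c, t^f]|)\) total, charged to those edges. (5) Constructing the lists \(\seq{t_1, \dots, t_\ell}\) and \(\seq{s_1, \dots, s_\ell}\): these scan \(Q[t, t^e]\) and \(P[i, s^e]\) respectively, one segment-ball intersection per edge. (6) For each \(t_k\), computing \(t^a_k, t^b_k, s^a_k, s^b_k\) and outputting the interval: this is \(O(1)\) work per \(t_k\) using the convexity of the free space within cell \(C_{i^e, j_k}\); since \(\ell\) is at most the number of edges of \(Q\) in \([t, t^e]\), this is again charged to scanned edges. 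The union-with-previous-interval bookkeeping is \(O(1)\) given a table indexed by \((i, j)\).

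The key point making the amortization work is that across the \emph{entire} execution of a single top-level call, the recursive structure never revisits an edge: the scans in a call \(\textsc{GreedyMappingP}(i, t)\) cover \(P[i, s^e]\) and \(Q[t, t^e]\) (or slightly beyond, up to \(t^f \le t^{f*}\), which is still within the next box region), and the next recursive call begins at \(i^e \ge i\) on \(P\) and at \(t^f > t\) on \(Q\), strictly past what the current call's interval-producing scans examined on the curve that ``left first.'' I would formalize this by assigning each edge of \(P\) and each edge of \(Q\) a constant number of tokens and showing each scan step consumes a token; a short case analysis over the (at most) five branches of the procedure confirms no edge is charged more than \(O(1)\) times, giving the \(O(n)\) bound. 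The symmetric claim for \(\textsc{GreedyMappingQ}(j, s)\) follows by exchanging \(P\) and \(Q\).

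The main obstacle I anticipate is step (4) — bounding the cost of the \(\frechet(P[i^e-1,i^e], Q[t^c,t^f]) \le \delta\) check and the analogous one on the \(Q\)-side — because a naive reading could suggest recomputing a free-space strip whose length overlaps with work done in earlier or later calls, which would break the linear bound. The resolution is that \(Q[t^c, t^f]\) lies essentially within the two boxes containing \(p_{i^e-1}\) and \(p_{i^e}\) (as shown in the proof of Lemma~\ref{lem:greedy_guarantee}, \(t^{c*} \le t^c < t^f \le t^{f*}\), and these points sit in adjacent boxes), and the recursion resumes at \(t^f\), so the strip examined here is disjoint from the strip examined in the recursive call \(\textsc{GreedyMappingP}(i^e, t^f)\) except for the single shared edge at the boundary. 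Making this disjointness precise — i.e., pinning down exactly which sub-intervals of \(Q\) (and \(P\)) are touched by each call and checking they tile the curves with \(O(1)\) overlap — is the part of the argument that requires care; everything else is routine per-edge constant-time geometry.
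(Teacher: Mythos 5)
Your proposal is correct and takes essentially the same approach as the paper's proof: both amortize the work of each call over the portion of \(P\) and \(Q\) scanned before the recursive call, observing that the recursion (which is a path, not a tree) resumes strictly past that portion so every edge is charged \(O(1)\) times. The paper phrases this as a direct induction (work before recursing is \(O(m''+n'')\), the recursive call costs \(O(n'+m'-n''-m'')\), total \(O(n'+m')\)), which is just your token-charging scheme written as a recurrence; your extra care about the disjointness of the \(\frechet(P[i^e-1,i^e],Q[t^c,t^f])\) check is handled there implicitly by charging that \(O(n'')\) cost to the vertices of \(Q[t,t^f]\), which the recursive call starting at \(t^f\) never rescans.
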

\begin{proof}
  We use the notation given in the description of \(\textsc{GreedyMappingP}\).
  Let \(m' = m - i + 1\), and let \(n'\) be the number of vertices remaining in \(Q\) after
  \(Q(t)\).
  If \(s^e = m\) or \(t^e = n\), then we spend \(O(m' + n')\) time checking if a suffix of \(P\) and
  \(Q\) lies in or near box \(B\).
  From here, assume neither \(s^e = m\) nor \(t^e = n\).

  Suppose edge \(P[i^e - 1, i^e]\) is good.
  Let \(m'' = i^e - i \geq 1\), and let \(n''\) be the number of vertices in \(Q[t, t^f]\).
  We need to scan \(P\) and \(Q\) to find \(i^e\), \(t^c\), and \(t^f\).
  We also need to check if every point of \(Q[t, t^c]\) lies in or close to \(B\).
  Doing these steps takes \(O(m'' + n'')\) time.
  We need to check if \(\frechet(P[i^e - 1, i^e], Q[t^c, t^f]) \leq \delta\).
  The portion of \(P\) in this check consists of a single line segment, so it can be done in
  \(O(n'')\) time.
  Finally, we do a recursive call to \(\textsc{GreedyMappingP}(i^e, t^f)\) that inductively takes
  \(O(n' + m' - n'' - m'')\) time.
  In total, we spend \(O(n' + m')\) time.
  A similar argument holds if \(P[i^e - 1, i^e]\) is bad but \(Q[j^e - 1, j^e]\) is good.

  Finally, suppose both edges are bad.
  We spend \(O(n' + m')\) time total searching for \(s^e\) and \(t^e\), finding points from the
  other curve that lie close to \(s^e\) and \(t^e\), and computing approximate reachability
  intervals for each of these pairs of points.
\end{proof}

\begin{lemma}
  \label{lem:decision_time}
  The approximate decision procedure can be implemented to run in \(O(n^3 / \alpha^2)\) time.
\end{lemma}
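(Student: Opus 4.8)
The plan is to bound the running time by charging work either to the~$O(n^2/\alpha^2)$ approximate reachability intervals stored on cell sides, or to the greedy mapping calls, each of which costs~$O(n)$ by Lemma~\ref{lem:greedy_time}. First I would count the intervals we ever designate. By Lemma~\ref{lem:decision_intervals}, every interval sits on a cell side~$\Set{i}\times[j-1,j]$ or~$[i-1,i]\times\Set{j}$ where the corresponding edge is bad and the opposing vertex is dangerous. There are~$O(n/\alpha)$ bad edges and~$O(n/\alpha)$ dangerous vertices, so there are~$O(n^2/\alpha^2)$ such sides, and we store at most one (merged) interval per side, as noted in the construction. The outer loop of the decision procedure iterates over pairs~$(i-1,j-1)$ with both~$p_{i-1}$ and~$q_{j-1}$ dangerous; since there are~$O(n/\alpha)$ dangerous vertices on each chain, this is~$O(n^2/\alpha^2)$ iterations, and inside each we either do~$O(1)$ direct interval computations via the Alt--Godau constant-time update, or invoke one greedy mapping subroutine.

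Next I would bound the greedy calls. A top-level greedy call is launched only from an iteration of the outer loop, so there are~$O(n^2/\alpha^2)$ top-level calls. Each such call, including all of its recursive descendants, runs in~$O(n)$ total time by Lemma~\ref{lem:greedy_time} (the recursion there is already charged against progress along~$P$ and~$Q$, so the whole recursion tree of one top-level call is~$O(n)$, not~$O(n)$ per recursive node). Hence the greedy work is~$O((n^2/\alpha^2)\cdot n) = O(n^3/\alpha^2)$. The direct interval computations and the merging of a new interval with a previously stored one on the same cell side each take~$O(1)$ time (a constant number of distance evaluations and a union of two intervals sharing an endpoint), for~$O(n^2/\alpha^2)$ total. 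The initial grid computation is~$O(n)$ by Chan and Rahmati~\cite[Lemma 1]{cr-iaadf-18}, and the initial distance check and two interval computations are~$O(1)$. Summing, the dominant term is~$O(n^3/\alpha^2)$.

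One point that needs care is how stored intervals are indexed and retrieved: when the outer loop reaches a cell~$C_{i,j}$, it must look up whether a nonempty interval was previously written to~$\Set{i-1}\times[j-1,j]$ or~$[i-1,i]\times\Set{j}$, possibly by a faraway greedy call. Since the total number of distinct cell sides that ever receive an interval is~$O(n^2/\alpha^2)$, I would keep them in a dictionary keyed by the (dangerous vertex, bad edge) pair, so that each insertion, lookup, and merge is~$O(\log n)$ or~$O(1)$ with hashing; the dictionary itself never exceeds~$O(n^2/\alpha^2)$ entries. I expect the main obstacle to be exactly this bookkeeping argument — making precise that no cell side is processed more than a constant number of times and that the~$O(n^2/\alpha^2)$ bound on stored intervals is not secretly inflated by the greedy subroutines writing to many cell sides. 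This follows because each top-level greedy call, running in~$O(n)$ time, can write~$O(n)$ intervals, but across all~$O(n^2/\alpha^2)$ top-level calls that is still~$O(n^3/\alpha^2)$ writes, which does not dominate; and every \emph{distinct} side written is one of the~$O(n^2/\alpha^2)$ bad-edge/dangerous-vertex sides, so the storage stays within budget. With that in hand, the~$O(n^3/\alpha^2)$ bound follows.
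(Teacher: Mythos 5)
Your proposal is correct and follows essentially the same argument as the paper: count the $O(n^2/\alpha^2)$ iterations over pairs of dangerous vertices, and charge each iteration either to $O(1)$ direct interval computations or to an $O(n)$-time greedy mapping call via Lemma~\ref{lem:greedy_time}. The extra bookkeeping discussion about storing and merging intervals is more detail than the paper gives but is consistent with its (terser) proof.
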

\begin{proof}
  Finding the grid \(G\) with the set of \(O(n / \alpha)\) bad vertices takes \(O(n)\)
  time~\cite[Lemma 1]{cr-iaadf-18}.
  There are at most twice as many bad edges as bad vertices, and at most twice as many dangerous
  vertices as bad edges, so there are \(O(n / \alpha)\) dangerous vertices.
  Therefore, the decision procedure iterates over \(O(n^2 / \alpha^2)\) values of \(i\) and \(j\).
  For each pair, we do at most two \(O(n)\) time calls to \(\textsc{GreedyMappingP}\) or
  \(\textsc{GreedyMappingQ}\), or we compute up to four approximate reachability intervals directly
  in constant time each.
\end{proof}

Our decision procedure is easily extended to actually output a correspondence of cost
\(O(\alpha)\cdot \delta\) instead of merely determining if one exists by concatenating the smaller
correspondences we discover directly during the iterations or during runs of
\(\textsc{GreedyMappingP}\) and \(\textsc{GreedyMappingQ}\) as we compute approximate reachability
intervals.
We are now able to state the main result of this section.
\begin{lemma}
  \label{lem:decision}
  Let \(P\) and \(Q\) be two polygonal chains in \(\R^d\) of at most \(n\) vertices each, let
  \(\alpha \in [\sqrt{n}, n]\), and let \(\delta \geq 0\) be a parameter.
  We can compute a Fr\'echet correspondence between \(P\) and \(Q\) of cost at most \(O(\alpha)
  \cdot \delta\) or verify that \(\frechet(P, Q) > \delta\) in \(O(n^3 / \alpha^2)\) time.
\end{lemma}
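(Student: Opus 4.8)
The plan is to assemble the facts already established about the approximate decision procedure of Section~\ref{sec:decision_procedure-decision_details}, so almost all of the work is done and the proof is short. First I would simply run that procedure on the inputs $P$, $Q$, $\delta$, and $\alpha$. By Lemma~\ref{lem:decision_time} it terminates in $O(n^3/\alpha^2)$ time, which establishes the claimed time bound and, in particular, dominates the total size of all the approximate reachability intervals it records and of all the partial correspondences produced during its $O(n^2/\alpha^2)$ iterations and its $O(n)$-time calls to $\textsc{GreedyMappingP}$ and $\textsc{GreedyMappingQ}$.

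Next I would split on how the procedure terminates. If it reports that $(m,n)$ lies on an approximate reachability interval, then Lemma~\ref{lem:decision_output}, applied with $(s',t')=(m,n)$, guarantees a Fr\'echet correspondence between $P$ and $Q$ of cost $O(\alpha)\cdot\delta$; hence reporting $\frechet(P,Q)\le O(\alpha)\cdot\delta$ is correct. To actually output such a correspondence rather than merely certify it, I would follow the construction indicated just before the lemma statement: starting from the interval containing $(m,n)$, trace backwards through the chain of intervals and recursive greedy calls that produced it, and concatenate, at each step, either the constant-complexity correspondence furnished by the Alt--Godau convexity argument within a single free space cell, or the bounded-complexity correspondence built inside $\textsc{GreedyMappingP}$ / $\textsc{GreedyMappingQ}$ (the in-box greedy matchings together with the long-edge matchings of Figures~\ref{fig:greedy_good} and~\ref{fig:greedy_remap}). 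Each such piece is already available from the bookkeeping the procedure performs, so the assembly costs no more than the running time already charged.

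If instead the procedure reports failure, I would argue the contrapositive of Lemma~\ref{lem:decision_guarantee}: were $\frechet(P,Q)\le\delta$, there would be a Fr\'echet correspondence $(\reparamP,\reparamQ)$ of cost at most $\delta$, and Lemma~\ref{lem:decision_guarantee} would force the procedure to report $\frechet(P,Q)\le O(\alpha)\cdot\delta$, contradicting the reported failure. Therefore $\frechet(P,Q)>\delta$, completing the case analysis.

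The one place that needs care --- the main obstacle, such as it is --- is the bookkeeping in the first case: checking that the partial correspondences discovered in far-apart regions of the free space diagram really do stitch into a single $s$- and $t$-monotone correspondence of cost $O(\alpha)\cdot\delta$. This works because every interval we designate approximately reachable records a starting pair in $P\times Q$ and an ending pair that is componentwise no smaller, and both greedy procedures only ever recurse forward along the two curves, so the concatenation is automatically monotone and its cost is the maximum of its pieces' costs; Lemmas~\ref{lem:greedy_output} and~\ref{lem:decision_output} already supply exactly the per-piece bound of $O(\alpha)\cdot\delta$. Handling the degenerate input $\delta = 0$ requires nothing beyond noting that the initial distance checks then behave as equality tests, which does not affect any of the above.
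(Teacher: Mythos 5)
Your proposal is correct and follows exactly the route the paper takes: Lemma~\ref{lem:decision} is obtained by combining Lemma~\ref{lem:decision_output} (correctness on success), the contrapositive of Lemma~\ref{lem:decision_guarantee} (correctness on failure), and Lemma~\ref{lem:decision_time} (running time), together with the remark preceding the lemma about concatenating the recorded partial correspondences to output an explicit correspondence.
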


\section{The Approximation Algorithm}
\label{sec:approximation_algorithm}

We now describe how to turn our approximate decision procedure into an approximation algorithm whose
approximation ratio is arbitrarily close to that of the decision procedure.
We emphasize that our techniques use the decision procedure as a black box subroutine, so any
improvement to the running time of our approximate decision procedure will imply the same
improvement to our approximation algorithm.
In short, we use our approximate decision procedure to binary search over a set of \(O(n)\)
distances approximating the distances between vertices of \(P\) and \(Q\).
If the Fr\'echet distance lies in a large enough gap between a pair of these approximate distances,
then we can simplify both polygonal chains so that their edge lengths become large compared to their
Fr\'echet distance.
We then run an exact Fr\'echet distance algorithm of Gudmundsson \etal~\cite{gmmw-ffdcl-19} designed
for this case.

Let \(P : [1, m] \to \R^d\) and \(Q : [1, n] \to \R^d\) be two polygonal chains in \(d\)-dimensional
Euclidean space, and suppose we have an approximate decision procedure for the Fr\'echet distance
between two polygonal chains with approximation ratio \(\alpha\).
We assume \(\alpha\) is at most a polynomial function of \(n\) (although it may be constant).
Let \(T(n, \alpha)\) denote the worst-case running time of the procedure on two polygonal chains of
at most \(n\) vertices each.
We assume \(T(n, \alpha) = \Omega(n)\).
Finally, consider any \(0 < \eps \leq 1\).
We describe how to compute an \(O((1 + \eps)\alpha)\)-approximation of \(\frechet(P, Q)\) in
\(O(T(n, \alpha) \log (n / \eps))\) time.

We begin by performing a binary search over a set~\(Z\) of \(O(n)\) values close to all of the
distances between pairs of vertices in \(P\) and \(Q\).
Let \(V\) denote the set of vertex points in \(P\) and \(Q\).
Our set \(Z\) is such that for any pair of distinct points \(o_1, o_2 \in V\), there exist \(x, x'
\in Z\) such that \(x \leq \dist(o_1, o_2) \leq x' \leq 2x\).
Such a set can be computed in~\(O(n \log n)\) time~\cite[Lemma 3.9]{dhw-afdrc-12}.
To perform the binary search, we simply search ``down'' if the approximate decision procedure
finds an \(\alpha\)-approximate correspondence, and we search ``up'' if it does not.
Let \(a\) and \(b\) be the largest value of \(Z\) for which the procedure fails and
the smallest value for which it succeeds, respectively.
If \(a\) does not exist, then we return the correspondence of cost \(\alpha \cdot b\)
found for \(b\).
We are guaranteed \(b\) exists, because the maximum distance between \(P\) and \(Q\) is achieved at
a pair of vertices.
From here on, we assume \(a\) exists.

We check if the approximate decision procedure finds a correspondence when given parameter \(\delta
:= 12 a / \eps\).
If so, let \(Z^a\) denote the sequence of distances \(\Seq{(1+\eps)^0\cdot a, (1+\eps)^1 \cdot a,
\dots, (1+\eps)^{\Ceil{12 / \eps}} \cdot a}\).
We binary search over \(Z^a\) and return the cheapest correspondence found.

Suppose no correspondence is found for \(12 a / \eps\).
We check if the approximate decision procedure finds a correspondence when given parameter
\(\delta := b / (2(1+\eps/2)(1+\sqrt{d})\alpha)\).
If \emph{not}, let \(Z^b\) denote the sequence of distances \(\Seq{b / (1+\eps)^0, b /
(1+\eps)^1, \dots, b / (1+\eps)^{\Ceil{2(1+\eps/2)(1+\sqrt{d})\alpha}}}\).
We binary search over \(Z^b\) and return the cheapest correspondence found.

Finally, suppose no correspondence is found for \(12a / \eps\) but one is found for \(b /
(2(1+\eps/2)(1+\sqrt{d})\alpha)\).
We perform a \(3a\)-simplification of \(P\) and \(Q\), yielding the polygonal chains
\(\hat{P}\) and \(\hat{Q}\) with at most \(n\) vertices each.
Gudmundsson \etal~\cite{gmmw-ffdcl-19} describe an \(O(n \log n)\) time algorithm that computes the
Fr\'echet distance of two polygonal chains \emph{exactly} if all of their edges have length at least
\((1 + \sqrt{d})\) times their Fr\'echet distance.
Their algorithm will succeed in finding an optimal Fr\'echet correspondence between \(\hat{P}\) and
\(\hat{Q}\).
This correspondence can be modified to create one for \(P\) and \(Q\) of cost at most \((1 + \eps)
\alpha \cdot \frechet(P, Q)\) (see Driemel~\etal~\cite[Lemmas 2.3 and 3.5]{dhw-afdrc-12}).

\begin{lemma}
  \label{lem:approximation_guarantee}
  The approximation algorithm finds a correspondence between \(P\) and \(Q\) of cost at most \((1 +
  \eps)\alpha \cdot \frechet(P, Q)\).
\end{lemma}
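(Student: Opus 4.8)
The plan is to verify, case by case, that in each branch of the algorithm we output a correspondence whose cost is at most $(1+\eps)\alpha\cdot\frechet(P,Q)$. Throughout I would track both the \emph{guarantee} side (the decision procedure succeeds at a given $\delta$ only certifies $\frechet(P,Q)\le O(\alpha)\delta$ in general, but here we use the sharper fact that success at $\delta$ means we \emph{have} a correspondence of cost at most $\alpha\delta$, and failure at $\delta$ means $\frechet(P,Q)>\delta$) and the \emph{search granularity} side (consecutive values in each of $Z$, $Z^a$, $Z^b$ differ multiplicatively by at most $2$ or $1+\eps$). The first thing to pin down is the meaning of $a$ and $b$: failure at $a$ gives $\frechet(P,Q)>a$, success at $b$ gives a correspondence of cost $\le \alpha b$, and by the defining property of $Z$ there is no value of $Z$ strictly between (well, between $a$ and $b$ there may be nothing, and $a<b$); more importantly, since $V$ realizes the diameter and $Z$ $2$-approximates all pairwise vertex distances, we get $a \le \frechet(P,Q)$ trivially and also control of where $\frechet(P,Q)$ sits relative to $a$ and $b$.

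Next I would handle the easy branch where $a$ does not exist: then the decision procedure succeeds at the smallest value $b\in Z$, so $\frechet(P,Q)\le \alpha b$; but also $b$ itself is at most the smallest nonzero pairwise vertex distance times $2$, and since $\frechet(P,Q)$ is at least half the minimum vertex gap whenever the curves are not identical... actually the clean statement is just that $\frechet(P,Q)\ge a$ fails to apply, so I would instead argue $b \le 2\,\frechet(P,Q)$ — wait, more carefully: $b$ is the smallest value of $Z$, and by the $Z$-property applied to the pair of vertices realizing $\frechet(P,Q)$, there is some $x\in Z$ with $x\le \frechet(P,Q)$; since $b$ is smallest and the procedure succeeds at $b$ but $a$ doesn't exist, $b\le x\le\frechet(P,Q)$, giving cost $\alpha b\le \alpha\,\frechet(P,Q)$. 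For the three main branches the structure is: (i) if the procedure succeeds at $\delta=12a/\eps$, then $\frechet(P,Q)\le 12\alpha a/\eps$ while also $\frechet(P,Q)>a$, so $\frechet(P,Q)\in[a,12\alpha a/\eps]$, which is covered by the geometric grid $Z^a$ of ratio $(1+\eps)$ reaching up to $(1+\eps)^{\lceil 12/\eps\rceil}a \ge e^{12}a \ge 12\alpha a/\eps$ (using $\alpha$ polynomial in $n$ — here I'd need $(1+\eps)^{\lceil 12/\eps\rceil}\ge 12\alpha/\eps$, which requires the exponent to beat $\log\alpha$; I suspect the intended reading is that $\alpha$ is bounded so $12/\eps$ suffices, or the grid length should involve $\log\alpha$, and I would state the inequality I actually use); binary searching $Z^a$ finds a value within a $(1+\eps)$ factor above $\frechet(P,Q)$, so the returned cost is $\le(1+\eps)\alpha\,\frechet(P,Q)$. (ii) If instead the procedure \emph{fails} at $\delta=b/(2(1+\eps/2)(1+\sqrt d)\alpha)$, then $\frechet(P,Q)>b/(2(1+\eps/2)(1+\sqrt d)\alpha)$ and $\frechet(P,Q)\le\alpha b$, again a bounded multiplicative window, covered by $Z^b$; binary search there yields cost $\le(1+\eps)\alpha\,\frechet(P,Q)$.

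The remaining branch — failure at $12a/\eps$, success at $b/(2(1+\eps/2)(1+\sqrt d)\alpha)$ — is the substantive one, and I expect it to be the main obstacle. Failure at $12a/\eps$ gives $\frechet(P,Q)>12a/\eps$, so $a < \eps\,\frechet(P,Q)/12$, which makes the $3a$-simplification harmless: by the quoted $\nu$-simplification bound, $\frechet(P,\hat P)\le 3a$ and $\frechet(Q,\hat Q)\le 3a$, so by the triangle inequality for Fr\'echet distance $\frechet(\hat P,\hat Q)\le\frechet(P,Q)+6a\le(1+\eps/2)\frechet(P,Q)$. For Gudmundsson \etal's algorithm to apply, I need every edge of $\hat P,\hat Q$ — which has length $\ge 3a$ — to be at least $(1+\sqrt d)\frechet(\hat P,\hat Q)$; success at $b/(2(1+\eps/2)(1+\sqrt d)\alpha)$ gives $\frechet(P,Q)\le \alpha b/(2(1+\eps/2)(1+\sqrt d)\alpha) = b/(2(1+\eps/2)(1+\sqrt d))$, and I also need to relate $a$ and $b$: since there is no value of $Z$ strictly between $a$ and $b$ and $Z$ is $2$-dense, $b\le 2a$; hence $\frechet(\hat P,\hat Q)\le(1+\eps/2)\frechet(P,Q)\le(1+\eps/2)\cdot b/(2(1+\eps/2)(1+\sqrt d)) = b/(2(1+\sqrt d)) \le a/(1+\sqrt d) \le 3a/(1+\sqrt d)$, so indeed $3a\ge(1+\sqrt d)\frechet(\hat P,\hat Q)$ and the long-edge algorithm returns an exact optimal correspondence for $\hat P,\hat Q$. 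Finally, lifting this via Driemel \etal's Lemmas 2.3 and 3.5 gives a correspondence for $P,Q$ of cost at most $\frechet(\hat P,\hat Q)+ (\text{simplification error})\le \frechet(P,Q)+O(a)$, and since $a<\eps\,\frechet(P,Q)/12$ this is at most $(1+\eps)\frechet(P,Q)\le(1+\eps)\alpha\,\frechet(P,Q)$ (the $\alpha$ factor is not even needed here, it is absorbed). The one place I would be careful is the exact constant bookkeeping in the Driemel lifting and in checking that the grid $Z^a$ is long enough given the stated dependence of $\alpha$ on $n$; I would either verify $(1+\eps)^{\lceil 12/\eps\rceil}\ge 12\alpha/\eps$ under the paper's assumption on $\alpha$, or note that only $\alpha=O(1)$ or a correspondingly longer grid is needed.
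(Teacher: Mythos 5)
Your first three branches track the paper's argument (the crossover-pair reasoning for the binary searches over \(Z\), \(Z^a\), and \(Z^b\) is exactly what the paper does), but your handling of the final branch --- the one the whole construction exists for --- rests on a false claim: that \(b \le 2a\). The set \(Z\) is not ``\(2\)-dense'' in the sense of consecutive elements lying within a factor of \(2\) of one another; it only guarantees that every \emph{pairwise vertex distance} is sandwiched within a factor of \(2\) by elements of \(Z\). Consecutive elements of \(Z\) can be arbitrarily far apart, and the large-gap case \(b \gg a\) is precisely the case this branch is designed to handle. Indeed, your own branch conditions refute \(b \le 2a\): failure at \(12a/\eps\) forces \(\frechet(P,Q) > 12a/\eps \ge 12a\), while success at \(b/(2(1+\eps/2)(1+\sqrt{d})\alpha)\) forces \(\frechet(P,Q) \le b/(2(1+\eps/2)(1+\sqrt{d})) \le b/4\), so in fact \(b > 48a\). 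Since \(b \le 2a\) is what you use to conclude \(3a \ge (1+\sqrt{d})\,\frechet(\hat{P},\hat{Q})\), your verification of the long-edge hypothesis for Gudmundsson \etal's algorithm collapses --- and that inequality need not hold at all: the minimum edge length \(3a\) guaranteed by the simplification is not by itself large enough.

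The missing idea is to exploit the gap in the set of pairwise vertex distances rather than the raw simplification bound. Since \(a\) and \(b\) are consecutive in \(Z\), the defining property of \(Z\) implies no distance between two vertices of \(V\) lies in the open interval \((2a, b/2)\): if \(x \le \dist(o_1,o_2) \le x' \le 2x\) and \(x' > a\), then \(x' \ge b\) and hence \(\dist(o_1,o_2) \ge x \ge x'/2 \ge b/2\). Every edge of \(\hat{P}\) or \(\hat{Q}\) joins two \emph{original} vertices and has length at least \(3a > 2a\), so each such edge actually has length at least \(b/2 \ge (1+\sqrt{d})(1+\eps/2)\,\frechet(P,Q) > (1+\sqrt{d})\,\frechet(\hat{P},\hat{Q})\), which is exactly the hypothesis the long-edge algorithm needs. (Two smaller remarks: your worry that \(Z^a\) must reach \(12\alpha a/\eps\) is a red herring --- the binary search only needs its top value to be one at which the decision procedure \emph{succeeds}, and success at \(12a/\eps\) has already been verified before \(Z^a\) is built; and the approximation bound comes from the crossover pair, failure at \(a'\) and success at \(b' = (1+\eps)a'\) giving cost at most \(\alpha b' < (1+\eps)\alpha\cdot\frechet(P,Q)\), not from the searched value being within a \((1+\eps)\) factor of \(\frechet(P,Q)\).)
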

\begin{proof}
  Suppose value \(a\) as defined in the procedure does not exist.
  We find a correspondence of cost at most \(\alpha \cdot b \leq \alpha \cdot \dist(p_1, q_1) \leq
  \alpha \cdot \frechet(P, Q)\).
  We assume from here on that \(a\) exists.

  Suppose a binary search over \(Z^a\) or \(Z^b\) is performed.
  There exist values \(a'\) and \(b' = (1 + \eps)a'\) such that the approximate decision procedure
  fails with \(a'\) but succeeds at finding a correspondence of cost at most \(\alpha \cdot b'\).
  We have \(\alpha \cdot b' = (1+\eps)\alpha \cdot a' < (1 + \eps)\alpha \cdot \frechet(P, Q)\).

  Finally, suppose we perform binary searches over neither \(Z^a\) nor \(Z^b\).
  In this case, we observe \(12 a / \eps < \frechet(P, Q) \leq b / (2(1+\eps/2)(1+\sqrt{d}))\).
  Every distance between a pair of vertices in \(P\) or \(Q\) is either at most \(2a < (\eps / 6)
  \frechet(P, Q)\) or at least \(b / 2 \geq (1 + \sqrt{d})(1+\eps/2)\frechet(P, Q)\).
  We observe \(\frechet(\hat{P}, \hat{Q}) \leq \frechet(P, Q) + 6a < (1+\eps/2)\frechet(P,
  Q)\)~\cite[Lemma 2.3]{dhw-afdrc-12}.
  Polygonal chains \(\hat{P}\) and \(\hat{Q}\) have no edges of length at most \(2a\), implying
  all edges have length at least \((1+\sqrt{d})(1+\eps/2)\frechet(P, Q) >
  (1+\sqrt{d})\frechet(\hat{P}, \hat{Q})\).
  The conditions for the algorithm of Gudmundsson \etal~\cite{gmmw-ffdcl-19} are met, and as
  explained earlier, their algorithm will lead to the desired correspondence between \(P\) and
  \(Q\).
\end{proof}

\begin{lemma}
  \label{lem:approximation_time}
  The approximation algorithm can be implemented to run in \(O(T(n, \alpha) \log (n / \eps))\) time.
\end{lemma}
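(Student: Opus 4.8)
The plan is to account for the running time of each phase of the approximation algorithm and observe that every phase consists of $O(\log(n/\eps))$ invocations of the decision procedure plus lower-order preprocessing. First I would handle the preprocessing: computing the set $Z$ of $O(n)$ candidate distances takes $O(n \log n)$ time by Driemel~\etal~\cite[Lemma 3.9]{dhw-afdrc-12}, and since $|Z| = O(n)$, the binary search over $Z$ to find $a$ and $b$ makes $O(\log n)$ calls to the decision procedure, costing $O(T(n,\alpha) \log n)$ time. Since $T(n,\alpha) = \Omega(n)$, the $O(n\log n)$ preprocessing is absorbed into $O(T(n,\alpha)\log n)$.

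Next I would bound the three branching cases. The two extra decision-procedure calls with parameters $12a/\eps$ and $b/(2(1+\eps/2)(1+\sqrt d)\alpha)$ each cost $O(T(n,\alpha))$. If we binary search over $Z^a$ or $Z^b$, I need to check these sequences have length $O(\mathrm{poly}(n)/\eps)$ or similar, so that the binary search uses $O(\log(n/\eps))$ calls: $Z^a$ has $\Ceil{12/\eps}+1 = O(1/\eps)$ terms, and $Z^b$ has $\Ceil{2(1+\eps/2)(1+\sqrt d)\alpha}+1 = O(\alpha)$ terms; since $\alpha$ is at most polynomial in $n$ by assumption, $\log|Z^b| = O(\log n)$, and $\log|Z^a| = O(\log(1/\eps))$, so in both cases the binary search costs $O(T(n,\alpha)\log(n/\eps))$. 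In the final case, we perform a $3a$-simplification of $P$ and $Q$ — which runs in $O(n)$ time by the greedy scan described in the preliminaries — and then invoke the exact long-edge algorithm of Gudmundsson~\etal~\cite{gmmw-ffdcl-19}, which runs in $O(n\log n)$ time; converting the resulting correspondence back to one for $P$ and $Q$ adds only $O(n)$ time. All of this is $O(n\log n) = O(T(n,\alpha)\log n)$.

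Summing over all phases, the total running time is $O(T(n,\alpha)\log(n/\eps))$, as claimed. The main thing to be careful about — more a bookkeeping subtlety than a genuine obstacle — is verifying that the lengths of $Z^a$ and $Z^b$ are indeed polynomially bounded in $n$ and $1/\eps$, which relies on the stated assumption that $\alpha$ is at most a polynomial in $n$; without that assumption the binary search over $Z^b$ would not fit in the claimed bound. Everything else is a straightforward tally of dominated terms using $T(n,\alpha) = \Omega(n)$.
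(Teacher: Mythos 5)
Your proof is correct and follows essentially the same accounting as the paper's: preprocessing and the final simplification step are absorbed into $O(T(n,\alpha)\log n)$ using $T(n,\alpha)=\Omega(n)$, and each binary search (over $Z$, $Z^a$, or $Z^b$) makes $O(\log(n/\eps))$ decision-procedure calls. The only difference is bookkeeping: you read the definitions of $Z^a$ and $Z^b$ literally to get $O(1/\eps)$ and $O(\alpha)$ terms, whereas the paper counts them as $O((1/\eps)\log(1/\eps))$ and $O((1/\eps)\log n)$; either way the logarithms are $O(\log(n/\eps))$, so the bound is unaffected.
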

\begin{proof}
  We spend \(O(n \log n)\) time computing \(Z\).
  We do \(O(\log n)\) calls to the approximate decision procedure binary searching over \(Z\).
  Sequences \(Z^a\) and \(Z^b\) contain \(O(\log_{1+\eps} (1 / \eps)) = O((1 / \eps) \log (1 /
  \eps))\) and \(O(\log_{1+\eps} \alpha) = O((1 / \eps) \log n)\) values, respectively.
  Therefore, binary searching over \(Z^a\) or \(Z^b\) requires \(O(\log ((1 / \eps) \log (n /
  \eps))) = O(\log (n / \eps))\) calls to the approximate decision procedure.
  The case where we have to simplify the polygonal chains and run the algorithm of Gudmundsson
  \etal~\cite{gmmw-ffdcl-19} requires only \(O(n \log n)\) additional time.
  The lemma follows.
\end{proof}

We may now state the main result of this section.

\begin{theorem}
  \label{thm:approx_from_decision}
  Suppose we have an \(\alpha\)-approximate decision procedure for Fr\'echet distance that runs in
  time \(T(n, \alpha)\) on two polygonal chains in \(\R^d\) of at most \(n\) vertices each.
  Let \(0 < \eps \leq 1\).
  Given two such chains \(P\) and \(Q\), we can find a Fr\'echet correspondence between \(P\) and
  \(Q\) of cost at most \((1+\eps)\alpha\cdot\frechet(P, Q)\) in \(O(T(n, \alpha) \log (n / \eps))\)
  time.
\end{theorem}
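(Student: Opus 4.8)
The plan is simple: by the time we reach this theorem, the entire algorithm has already been described, so the proof is just a matter of packaging Lemma~\ref{lem:approximation_guarantee} with Lemma~\ref{lem:approximation_time}. The algorithm spends $O(n\log n)$ time building the value set $Z$ of $O(n)$ inter-vertex distance estimates, performs a binary search over $Z$ using the given $\alpha$-approximate decision procedure, possibly follows that with a ``refined'' binary search over one of the geometric sequences $Z^a$ or $Z^b$, and in the one remaining case $3a$-simplifies $P$ and $Q$ and invokes the long-edge exact algorithm of Gudmundsson \etal~\cite{gmmw-ffdcl-19}. Lemma~\ref{lem:approximation_guarantee} says whatever correspondence this returns has cost at most $(1+\eps)\alpha\cdot\frechet(P,Q)$, and Lemma~\ref{lem:approximation_time} says the whole procedure runs in $O(T(n,\alpha)\log(n/\eps))$ time; the theorem is their conjunction, so the ``proof'' is essentially one sentence citing both.

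For completeness I would recall the shape of each half. The approximation bound follows the three-way case split that defines the algorithm. If the fail-value $a$ is never set, the procedure succeeds on $b=\min Z$, and since some element of $Z$ lower-bounds $\dist(p_1,q_1)\le\frechet(P,Q)$, the returned cost $\alpha b$ is at most $\alpha\cdot\frechet(P,Q)$. If a refined binary search over $Z^a$ or $Z^b$ runs, I would argue that because consecutive terms of those sequences are in ratio $1+\eps$, the search exhibits adjacent values $a'<b'=(1+\eps)a'$ with the procedure failing on $a'$ (hence $a'<\frechet(P,Q)$) and succeeding on $b'$ with cost $\le\alpha b'=(1+\eps)\alpha a'<(1+\eps)\alpha\cdot\frechet(P,Q)$. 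The third case — neither refined search runs — is the one with content: then $12a/\eps<\frechet(P,Q)\le b/(2(1+\eps/2)(1+\sqrt{d}))$, so every inter-vertex distance is either $\le 2a<(\eps/6)\frechet(P,Q)$ or $\ge b/2\ge(1+\sqrt{d})(1+\eps/2)\frechet(P,Q)$, and after $3a$-simplification no edge of $\hat P$ or $\hat Q$ is shorter than $(1+\sqrt{d})\frechet(\hat P,\hat Q)$ (using $\frechet(\hat P,\hat Q)\le\frechet(P,Q)+6a<(1+\eps/2)\frechet(P,Q)$ from \cite[Lemma 2.3]{dhw-afdrc-12}); thus the hypotheses of \cite{gmmw-ffdcl-19} hold, and transporting its optimal correspondence back through the simplification loses only a $(1+\eps)$ factor by \cite[Lemmas 2.3 and 3.5]{dhw-afdrc-12}.

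The running-time half is bookkeeping: $O(n\log n)$ to build $Z$; $O(\log n)$ decision-procedure calls for the binary search over the $O(n)$-element set $Z$; $O(\log(n/\eps))$ further calls if a refined search runs, since $Z^a$ has $O((1/\eps)\log(1/\eps))$ terms and $Z^b$ has $O((1/\eps)\log n)$ terms (here I use that $\alpha$ is polynomially bounded in $n$, so $\log_{1+\eps}\alpha=O((1/\eps)\log n)$); and $O(n\log n)$ extra for the simplify-and-Gudmundsson case. Since $T(n,\alpha)=\Omega(n)$, each additive term is absorbed into $O(T(n,\alpha)\log(n/\eps))$. The main obstacle — really the only step beyond bookkeeping — is the third approximation case: one has to verify that the single parameter $3a$ is at once small enough that $\frechet(\hat P,\hat Q)$ stays within a $(1+\eps)$ factor of $\frechet(P,Q)$ and large enough that every edge of $\hat P,\hat Q$ exceeds $(1+\sqrt{d})$ times that distance, and to see that precisely the gap $12a/\eps<\frechet(P,Q)\le b/(2(1+\eps/2)(1+\sqrt{d}))$ forces both. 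Everything else is an immediate appeal to Lemmas~\ref{lem:approximation_guarantee} and~\ref{lem:approximation_time}.
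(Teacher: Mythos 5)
Your proposal matches the paper exactly: the theorem is stated as the immediate conjunction of Lemma~\ref{lem:approximation_guarantee} and Lemma~\ref{lem:approximation_time}, and your recap of the three-way case analysis (no fail value, refined search over \(Z^a\) or \(Z^b\), and the \(3a\)-simplification gap case) and of the running-time bookkeeping reproduces the paper's own arguments for those lemmas. No gaps.
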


Combining Theorem~\ref{thm:approx_from_decision} with Lemma~\ref{lem:decision} while
setting \(\eps := 1\) gives us our main result.

\begin{corollary}
  Let \(P\) and \(Q\) be two polygonal chains in \(\R^d\) of at most \(n\) vertices each, and let
  \(\alpha \in [\sqrt{n}, n]\).
  We can compute a Fr\'echet correspondence between \(P\) and \(Q\) of cost at most \(O(\alpha)
  \cdot \frechet(P, Q)\) in \(O((n^3 / \alpha^2) \log n)\) time.
\end{corollary}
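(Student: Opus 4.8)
The plan is to simply chain the two main results already established, so the argument is short; the only real work is checking that the hypotheses of Theorem~\ref{thm:approx_from_decision} are met by the procedure of Lemma~\ref{lem:decision}. First I would unpack Lemma~\ref{lem:decision}: it gives, for any \(\delta \geq 0\), an algorithm running in time \(O(n^3/\alpha^2)\) that either outputs a Fr\'echet correspondence of cost at most \(c\alpha\cdot\delta\) for some absolute constant \(c\) or certifies \(\frechet(P,Q) > \delta\). This is exactly an \((c\alpha)\)-approximate decision procedure in the sense used in Section~\ref{sec:approximation_algorithm}, with worst-case running time \(T(n, c\alpha) = O(n^3/\alpha^2)\).

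Next I would verify the two standing assumptions of Theorem~\ref{thm:approx_from_decision}. Since \(\alpha \in [\sqrt n, n]\), the approximation ratio \(c\alpha\) is \(O(n)\), hence bounded by a polynomial in \(n\), as required. And since \(\alpha \leq n\), we have \(n^3/\alpha^2 \geq n\), so \(T(n, c\alpha) = \Omega(n)\). Thus both preconditions hold.

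Finally I would apply Theorem~\ref{thm:approx_from_decision} with \(\eps := 1\). This yields an algorithm that finds a Fr\'echet correspondence of cost at most \((1+\eps)\cdot c\alpha\cdot\frechet(P,Q) = 2c\alpha\cdot\frechet(P,Q) = O(\alpha)\cdot\frechet(P,Q)\), running in time \(O\big(T(n, c\alpha)\,\log(n/\eps)\big) = O\big((n^3/\alpha^2)\,\log n\big)\). That is precisely the claimed bound, completing the proof.

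I do not anticipate any genuine obstacle here: the corollary is purely a composition of Lemma~\ref{lem:decision} and Theorem~\ref{thm:approx_from_decision}. The only point requiring a moment's care is making sure the bare ``\(O(\alpha)\)'' cost guarantee of the decision procedure is read as a fixed \(\Theta(\alpha)\) approximation ratio feeding into the black-box reduction, and that absorbing the extra constant factor \((1+\eps) = 2\) still leaves the overall ratio at \(O(\alpha)\).
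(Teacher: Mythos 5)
Your proposal is correct and matches the paper's own argument, which consists precisely of combining Theorem~\ref{thm:approx_from_decision} with Lemma~\ref{lem:decision} while setting \(\eps := 1\). Your extra verification that the decision procedure's ratio is polynomial in \(n\) and that its running time is \(\Omega(n)\) is a sensible (if routine) bit of diligence that the paper leaves implicit.
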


\section{Conclusion}
\label{sec:conclusion}

We described the first strongly subquadratic time approximation algorithm for the continuous
Fr\'echet distance that has a subexponential approximation guarantee.
Specifically, it computes an \(O(\alpha)\)-approximate Fr\'echet correspondence in \(O((n^3 /
\alpha^2) \log n)\) time for any \(\alpha \in [\sqrt{n}, n]\).
We admit that our result is not likely the best running time one can achieve and that it serves more
as a first major step toward stronger results.
In particular, we are at a major disadvantage compared to the \(O(n \log n + n^2 / \alpha^2)\) time
algorithm of Chan and Rahmati~\cite{cr-iaadf-18} for discrete Fr\'echet distance in that they rely
on a constant time method for testing subsequences of points for equality and we know of no
analogous procedure for quickly testing (near) equality of subcurves.
However, it may not be the case that our own running time analysis is even tight;
perhaps a more involved analysis applied to a slight modification of our decision procedure could
lead to a better running time.
We leave open further improvements such as the one described above.

\paragraph*{Acknowledgements}
The authors would like to thank Karl Bringmann and Marvin K\"unnemann for some helpful discussions
concerning turning an approximate decision procedure into a proper approximation algorithm.


\bibliographystyle{plainurl}
\bibliography{frechet_approximation}

\end{document}